\newcommand{\com}[1]{\textcolor{red}{[\hl{#1}]}}
\renewcommand{\com}[1]{}
\newtheorem{theorem}{Theorem}[section]
\newtheorem{lemma}[theorem]{Lemma}
\newtheorem{proposition}[theorem]{Proposition}
\newtheorem{definition}[theorem]{Definition}
\theoremstyle{remark}
\newtheorem*{remark}{Remark}
\newcommand{\setR}{\mathbb{R}}
\newcommand{\setN}{\mathbb{N}}
\newcommand{\setC}{\mathbb{C}}
\newcommand{\setK}{\mathbb{K}}
\newcommand{\h}{{\mathfrak \xi}}
\newcommand{\bh}{{\bar{\h}}}
\newcommand{\Sp}{\Sigma}
\newcommand{\bSp}{\Sp^*}
\newcommand{\bP}{{\bar\Psi}}
\newcommand{\bp}{{\bar\psi}}
\newcommand{\met}{{\eta}}
\newcommand{\Lie}{\mathcal L}
\newcommand{\Lag}{\mathscr{L}}
\newcommand{\ExT}{\mathsf{\Lambda}}
\renewcommand{\d}{{\operatorname d}} 
\newcommand{\dom}{\d^\omega}
\newcommand{\lmtrois}[1]{\lambda^{\m(3)}}
\newcommand{\omsix}{\omega^{(6)}}
\newcommand{\omcinq}[1]{\omega^{(5)}_{#1}}
\newcommand{\aquatre}{\alpha^{(4)}}
\newcommand{\atrois}[1]{\alpha^{(3)}_{#1}}
\newcommand{\adeux}[1]{\alpha^{(2)}_{#1}}
\newcommand{\aun}[1]{\alpha^{(1)}_{#1}}
\newcommand{\vsept}[1]{{\varpi^{(7)}_{#1}}}
\newcommand{\vhuit}[1]{{\varpi^{(8)}_{#1}}}
\newcommand{\vneuf}[1]{{\varpi^{(9)}_{#1}}}
\newcommand{\vdix}{{\varpi^{(10)}}}
\newcommand{\gneuf}{{\gamma^{(9)}}}
\newcommand{\gtrois}{{\gamma^{(3)}}}
\newcommand{\nonbc}[1]{{\widehat{#1}}}
\newcommand{\Omhor}{\Omega_{\textrm{hor}}}
\newcommand{\lr}{\lrcorner\,}
\newcommand{\vol}{\mathrm{vol}}
\newcommand{\ve}{\varepsilon}
\newcommand{\lp}{\left(}
\newcommand{\rp}{\right)}
\newcommand{\wb}[2]{\left[#1\wedge#2 \right]}
\newcommand{\fwb}[2]{\frac{1}{2}\wb{#1}{#2}}
\DeclareMathOperator{\Ric}{Ric}
\DeclareMathOperator{\Scal}{Scal}
\newcommand{\Curv}{R}
\DeclareMathOperator{\tr}{tr}
\DeclareMathOperator{\End}{End}
\DeclareMathOperator{\Iso}{Iso}
\DeclareMathOperator{\Spin}{Spin}
\DeclareMathOperator{\GL}{GL}
\DeclareMathOperator{\id}{id}
\newcommand{\Cinf}{\mathcal{C}^\infty}
\newcommand{\isom}{{\xrightarrow{\sim}}}
\renewcommand{\varphi}{\phi}
\newcommand{\tot}{\mathcal P}
\DeclareMathOperator{\SO}{SO}
\newcommand{\SOp}{\SO^{+}}
\DeclareMathOperator{\SOpL}{{\SOp_{1,3}}}
\newcommand{\Spinp}{\Spin^+}
\renewcommand{\so}{\operatorname{\mathfrak{so}}}
\newcommand{\spin}{\operatorname{\mathfrak{spin}}}
\DeclareMathOperator{\soL}{{\so_{1,3}}}
\DeclareMathOperator{\Mink}{\m}
\DeclareMathOperator{\Cl}{Cl}
\newcommand{\aux}{V}
\newcommand{\Lor}{\SOp_{1,3}}
\renewcommand{\lor}{\so_{1,3}}
\let\lbar\l
\renewcommand{\l}{\lor} 
\newcommand{\m}{{\setR^{1,3}}}
\newcommand{\ET}{{\mathcal E}}
\newcommand{\sopq}{\so_{p,q}}
\newcommand{\Rpq}{{\setR^{p,q}}}
\newcommand{\sprod}[2]{\langle #1 | #2 \rangle}
\newcommand{\lel}[3]{#1\leqslant #2 \leqslant #3}
\title{Physics and Geometry from a Lagrangian: Dirac Spinors on a Generalised Frame Bundle}
\author{Jérémie Pierard de Maujouy
	\thanks{ \textit{Institut Mathématiques de Jussieu-Paris Rive Gauche (IMJ-PRG)}
	 	\textit{Université Paris Cité}, France, \newline
	 	{jeremie.pierard-de-maujouy@imj-prg.fr}
	 }
}
\date{\today}
\begin{document}

\maketitle

\begin{abstract}
We clarify the structure obtained in Hélein and Vey's proposition for a variational principle for the Einstein-Cartan gravitation formulated on a frame bundle starting from a structure-less differentiable $10$-manifold \cite{LFB}. The obtained structure is locally equivalent to a frame bundle which we term \enquote{generalised frame bundle}. In the same time, we enrich the model with a Dirac spinor coupled to the Einstein-Cartan spacetime. The obtained variational equations generalise the usual Einstein-Cartan-Dirac field equations in that they are shown to imply the usualy field equations when the generalised frame bundle is a standard frame bundle.

\end{abstract}

\tableofcontents
\section{Introduction}
\subsection{Introduction}

The theory of General Relativity models spacetime as a $4$-dimensional differentiable manifold. The gravitational field is encoded in a linear connection and provides \emph{geometrical structure} to the manifold. In the original approach, a Lorentzian metric models the gravitational potential and the gravitational field corresponds to the associated Levi-Civita connection. The variational formulation of the theory is due to Einstein and Hilbert~\cite{MTW}; the dynamical field is the metric and the Lagrangian density is simply the scalar curvature (multiplied by the pseudo-Riemannian density).

Following the development of the framework of differential geometry as well as research on General Relativity, alternative formulations were discovered.
The \emph{Palatini formulation} of General Relativity relaxes the relation between the linear connection and the metric. This allows for a first-order formulation of gravity. A variant of this formulation uses the \emph{tetrad formalism}: the metric field is replaced by a frame field which defines the metric for which it is orthonormal and an orientation. Since the metric field is a quadratic function of the frame field, the tetrad can be thought of as a \enquote{square-root of the metric}. It is non-unique and is in fact subject to a $\SOp_{1,3}$ gauge freedom.

The \emph{Einstein-Cartan theory} is the case where there is a Lorentzian metric and the connection is required to be metric but torsion is allowed. The corresponding geometry on spacetime is suitably understood in the framework of \emph{Cartan geometry}~\cite{CartanWise}.
Differences with Einstein's original theory manifest themselves when the gravitational field is coupled to spinor fields, since they act as a source for the torsion field. One simple example is the \emph{Einstein-Cartan-Dirac} theory which couples the Einstein-Cartan gravitation with a Dirac spinor. The usual variational treatment of spinor field theories in a dynamical spacetime uses a tetrad in order to allow variations of the metric structure while the coefficients of the spinor fields remain \enquote{constant}. When a tetrad is used instead of a metric, the theory is called \emph{Sciama-Kibble} theory \cite{GRSpinTors, STandFields}.

The theory we discuss in this paper stems from the idea of formulating gravitation not on spacetime itself but on the frame bundle of spacetime. This frame bundle forms a principal bundle above spacetime and General Relativity and its generalisation can be understood as gauge theories for the Poincaré group~\cite{GaugeGrav,GaugeGravBook}. As such, our theory can be likened to the Kaluza-Klein theories in which a theory on a principal bundle induces a field theory on the base spacetime manifold. More motivations come from the perspective of a quantum version of the theory. On one hand, Lurçat suggested~\cite{QFTPoincare, SpinGroupKinematics} to consider interacting theories not on Minkowski space but on the Poincaré group: this could allow for quantum particles to go out of their usually fixed \enquote{spin shell}.
On the other hand, since our theory only fixes the manifold structure of the frame bundle, it is conceivable to obtain solutions with non diffeomorphic spacetimes. This possibility, as well as its quantum implications, are left for further investigation.

In~\cite{LFB} Hélein and Vey, following~\cite{CFTRefFrames}, proposed an action defined on a (structure-less) $10$-dimensional manifold $\tot$ such that under some hypotheses a solution of the associated Euler-Lagrange equations defines:
\begin{itemize}
\item A $4$-dimensional manifold $\ET$ over which $\tot$ is fibred,
\item A Riemannian metric and an orientation on $\ET$,
\item An identification from the fibration bundle $\tot\to \ET$ to the orthonormal frame bundle of $\ET$,
\item A metric connection on $\ET$ satisfying the (Riemannian) Einstein-Cartan field equations.
\end{itemize}
As is already mentioned in~\cite{LFB}, their model can be used with a Lorentzian signature (and space-and-time orientations) but in this case both the construction of the fibration and the mechanism giving the Einstein-Cartan field equations in vacuum require additional hypotheses to work. The reason lies in the non-compactness of the (connected) Lorentz group. We argue that there is another obstruction to the construction of the fibration $\tot\to\ET$, even in Riemannian signature. Indeed in this case the $4$-dimensional base space $\ET$, obtained as an orbit space, can have singularities and is not \emph{a priori} a smooth manifold. Our second point is that their derivation of the field equations involves heavy computations on which we pretend a more geometrical perspective can shed some light.

The aim of the present paper is twofold: we provide an extension of Hélein \& Vey's model to Einstein-Cartan's gravity coupled to Dirac spinors, and we take a very different approach to the cancellation mechanism which allows the factor the obtained field equations to the underlying spacetime.


The (tentative) principal bundle structure is constructed from a nondegenerate $1$-form $\omega\oplus\alpha$ with value in the Poincaré Lie algebra decomposed as $\lor\ltimes \Mink$. 
The $1$-form obeys the following equations: 
\begin{subequations}\label{eqno:introCartForm}
\begin{align}
	\d \omega^i + \fwb{\omega}{\omega}^i &= \frac12 \Omega^i_{bc} \alpha^b\wedge\alpha^c\\
	\d \alpha^a + \wb{\omega}{\alpha}^a &= \frac12 \Omega^a_{bc} \alpha^b\wedge\alpha^c
\end{align}
\end{subequations}

From these equations it is possible to build a Lie algebra action on the manifold. In fact, this mechanism is well known from the group manifold approach to supergravity~\cite{GeoSUGRA,SUGRAI}. However more hypotheses (of a global nature) are needed for the action to integrate to a Lie group action, and for the fibration over the orbit space to form a principal bundle fibration. For this reason, we will call \enquote{generalised frame bundle} a manifold equipped with such a coframe. We do not discuss the comparison with standard frame bundles: it will be the subject of an upcoming publication.

In~\cite{LFB} field equations on the base manifold are obtained in the following way: one first uses the Euler-Lagrange equations associated to the variation of an extra field that plays the role of Lagrange multipliers. They allow to construct the frame bundle structure on the $10$-manifold. Then in order to deal with the remaining Euler-Lagrange equations, associated to variations of the fields $\omega$ and $\alpha$, a change of fibre coordinates is applied that amounts to choosing a local frame on the four dimensional base manifold. This allows converting equivariant quantities on the frame bundle to \emph{invariant quantities}. In these coordinates the Euler-Lagrange equations decompose into two terms: one is manifestly an exact divergence and the other one is invariant under the group action. Integration along fibres allows concluding that the invariant part of the Euler-Lagrange equations has to vanish, which turns out to be equivalent to the Einstein field equations on spacetime.

Once a part of Euler-Lagrange equations is used to obtain the (local) frame bundle structure, we approach the problem of the field equations from a different perspective. Instead of considering the Euler-Lagrange equations corresponding to field variations associated with a local trivialisation and specific fibre coordinates, it is enough to consider variations of the fields which are \emph{equivariant} with respect to the action of the structure group. The calculations are in a sense equivalent but our perspective makes it clear why and how the Euler-Lagrange equations split into two terms that respectively give an exact term and the usual field equations on the base manifold. With this insight it is simple to extend the theory to include spinor fields. One peculiarity of the spinor theory thus constructed is that instead of starting from a spacetime and a tetrad which connects the spacetime to a reference bundle in which spinor fields live, we start from the tentative frame bundle, on which the spinor fields are maps to a fixed vector space, and from there construct back the spacetime.

The paper is organised as follows. 
In Section~\ref{secno:EC} we introduce the physical Lagrangians for the \emph{Einstein-Cartan theory} of gravitation with torsion, and its minimal coupling with a \emph{Dirac spinor}. Section~\ref{annRic} starts with a quick presentation of the structure of \emph{frame bundles} and connections thereon. We then motivate and define the structure of \emph{generalised frame bundles}. Our Lagrangian field theory is finally introduced in Section~\ref{secno:Lag}, in which we also derive the Euler-Lagrange equations. The theory is defined on a structure-less $10$-manifold which is meant to acquire a structure of generalised frame bundle. We show in Section~\ref{secno:EuclFE} that when the generalised frame bundle structure is that of a standard frame bundle, the Euler-Lagrange equations imply the Einstein-Cartan-Dirac field equations on the underlying $4$-dimensional spacetime. The derivation relies crucially on a \emph{cancellation mechanism}, which allows decoupling out an exact term from the Euler-Lagrange equations.
\subsection{Notations and conventions}\label{secno:notconv}

We will make free use of the Einstein convention:
\[ A_iB^i := \sum_i A_i B^i \]
as well as of the so-called \enquote{musical} isomorphisms raising and lowering indices through a metric that will generally be implicit (as long as there is no ambiguity). The convention for the order of the indices will be to keep the relative order of upper indices and lower indices, and place upper indices \emph{before lower indices}. For example given a tensor $T^\pi{}_{\mu\nu}$ the corresponding totally covariant and totally contravariant tensors are
\begin{align*}
	T_{\tau\mu\nu} &:= g_{\tau\pi}T^\pi{}_{\mu\nu}\\
	T^{\pi\tau\rho} &:= g^{\tau\mu}g^{\rho\nu}T^\pi{}_{\mu\nu}
\end{align*}
given a metric $g$ on the space in which the indices $\tau,\mu,\nu$ live. The metric and the inverse metric will be both written with the same symbol, but we will generally use the Kronecker delta $\delta^\mu_\nu = g^{\mu\tau}g_{\tau\nu}$ for the identity which is the corresponding endomorphism of both covariant and contravariant vectors.

We will write $\met_{ab}$ for the Minkowski metric on the Minkowski space $\m$, used at the same time as an abelian Lie group and as an abelian Lie algebra. Our convention for the Lorentzian signature is $(+---)$ and for the Clifford algebras $u\cdot v + v\cdot u = -2\sprod u v$. We will be working with the connected \emph{proper orthochronous Lorentz group} $\SOpL$ which we will just call \emph{Lorentz group}. Its Lie algebra is $\lor$.
The \emph{Poincaré group} is isomorphic to the semi-direct product $\Lor \ltimes \m$ 
and its associated Lie algebra is $\lor\ltimes \m$

The differentiable manifolds we consider are always \emph{finite dimensional second countable Hausdorff}.

We will be using for vector-valued differential forms a duality notation convention that is described in detail in Appendix~\ref{anndual}.



\section{Spacetime with torsion: the Einstein-Cartan theory}\label{secno:EC}
We start with a very succinct presentation of the Einstein-Cartan theory of gravitation and its coupling to Dirac spinor fields. 

\paragraph{Einstein's General Relativity}
In Einstein's original theory, gravitation is modelled by \emph{the curvature} of a Lorentzian metric a the differentiable manifold modelling the spacetime. More accurately, the Newtonian gravitational field is replaced by the torsion-free Levi-Civita connection associated to the metric, which defines the geodesic equation governing inertial trajectories. Newton's formula for the interaction between massive bodies and Poisson's equation for the gravitational potential are replaced by Einstein's field equation, which is a second order differential equation, and the geodesic equation. 

Einstein's general relativity can be formulated as a Lagrangian field theory as follows: spacetime is modelled by a $4$-dimensional manifold $\ET$ with a dynamical metric $g$ of Lorentzian signature. The (adimensional) Lagrangian is the \emph{Hilbert-Einstein Lagrangian}:
\begin{equation}\label{eqno:EHLag}
	\Lag_{\text{HE}}[g] = \Scal^g \vol^g
\end{equation}
with $\vol$ the Lorentzian volume density associated with $g$ and $\Scal^g$ the scalar curvature.

\paragraph{Palatini's Lagrangian}

It was then realised that a first order formulation is possible if one was to consider as possible field a couple gathering the metric $g$ itself and an a priori independent affine connection $\nabla$. This is called the \emph{Palatini formalism} and exists in several variations, requiring the connection to be metric or not (see \cite{MetricAffine1,MetricAffine2,Palatini}). In the case the connection is neither assumed to be metric nor torsion-free an extra gauge freedom appears, the so-called projective symmetry \cite{MetricAffine1,Palatini}. The \emph{Palatini Lagrangian} is formally similar to~\eqref{eqno:EHLag}:
\begin{equation}\label{eqno:PalLag}
	\Lag_{\text{Palatini}}[g, \nabla]
		 = \tr_g \lp \Ric^\nabla \rp \vol^g
\end{equation}

\paragraph{Tetradic formulation}

Within the Palatini formalism, it is possible to make explicit a Lorentzian gauge symmetry: this is the so-called \emph{tetradic Palatini formalism}, in which the metric field is replaced by a field of linear frames, or equivalently (in our case) \emph{coframes}. The linear frame is called a \emph{tetrad}, or \emph{vierbein} (\emph{vielbein} in general dimension). The \emph{Einstein-Cartan theory} (more precisely Einstein-Cartan-Sciama-Kibble) 
\cite{GRSpinTors}
is the case in which the connection is assumed to be \emph{metric} but is allowed to have torsion.

The Lagrangian has a convenient expression if instead of working with the affine connection on spacetime we use the image connection $\nabla$ on $\setR^{1, 3}\times \ET\to \ET$ under the tetrad $e^a : T\ET \to \setR^{1, 3}\times \ET$. Since $\setR^{1,3}\times \ET$ is a trivial vector bundle, $\nabla$ has a global connection $1$-form $\omega$ 
and the Lagrangian takes the following form:
\begin{equation}\label{eqno:TetradLag}
	\Lag_{\text{Tetrad}}[e, \omega]
		= \eta^{ac}\lp \d \omega + \fwb\omega\omega \rp_{a}^{b} \wedge e^{(2)}_{bc}
\end{equation}
with 
\begin{itemize}
	\item $\eta^{ac}$ the inverse Lorentzian metric on $\setR^{1,3}$,
	\item $\d\omega + \fwb \omega\omega := \Curv \in \Omega^2(\ET, \End(\setR^{1,3}))$ the curvature endomorphism of $\nabla$,
	\item $e^{(2)}_{ab}$ the dual $2$-forms as defined in Appendix~\ref{anndual}.
\end{itemize}
This is the form of the Lagrangian we will be using.

\paragraph{The Einstein-Cartan field equations}

The field equations are the Euler-Lagrange equations associated to the Lagrangian~\eqref{eqno:TetradLag}. They take the following form:
\begin{subequations}
\begin{equation}
	\Curv^b_a \wedge e^{(1)}_{dbc}
		= 0
\end{equation}
\begin{equation}
	T^d\wedge e^{(3)}_{bcd} = 0
\end{equation}
\end{subequations}
with $T^d = \d e^d + [\omega, e]^d$ the torsion of the affine connection, transported back to a $\aux$-valued $2$-form.

These equations are equivalent to the more usual equation
\begin{subequations}
\begin{equation}
	2\Ric^\omega - \tr_g \Ric^\omega \eta = 0
\end{equation}
along with a similar equation on torsion (after multiplication by $1/2$)
\begin{equation}
	T + 2\tr(T)\wedge \id = 0
\end{equation}
\end{subequations}

\paragraph{Einstein-Cartan-Dirac theory}

The Einstein-Cartan theory can be coupled to dynamical Dirac spinors. We need to introduce the relevant structures:
\begin{enumerate}
	\item $\Sp$ is an irreducible complex module over the Clifford algebra $\Cl_{1,3}$,
	\item $\so_{1,3}\hookrightarrow \Cl_{1,3}$ acts on $\Sp$ by endomorphisms $\sigma_a^b$ which preserve (infinitesimally) a hermitian product $\sprod{\psi_1}{\psi_2} =: \bar{\psi_1}\psi_2$,
	\item $\gamma^a : {\setR^{1,3}}^* \to \End(\Sp)$ represent the action of dual vectors on the spinor module,
	\item The $\End(\Sp)$-valued $3$-form $\gamma^{(3)} = \gamma^a e^{(3)}_a$.
\end{enumerate}

Let $\psi$ be a $\Sp$-valued field on $\ET$. Its covariant derivative takes the form
\[
	\d \psi + \omega^a_b \sigma_a^b \psi
\]
The Einstein-Cartan-Dirac Lagrangian is the following:
\footnote{A careful physical coupling between the two terms would use coupling constants.}
\begin{equation}\begin{multlined}\label{eqno:ECDLag}
	\Lag_{\text{ECD}}[e, \omega, \psi]
		=
			\eta^{ac}\Curv_{a}^{b} \wedge e^{(2)}_{bc} 
			- \frac12 \lp
				\bp \gamma^{(3)} \wedge (\d \psi + \omega^a_b \sigma_a^b \psi)
				+
				(\d \bp + \omega^a_b \bar\sigma_a^b \bp) \wedge \gamma^{(3)} \psi
			\rp
			+ m^2 \bp \psi e^{(4)}
\end{multlined}
\end{equation}
with $\bar\sigma^b_a$ the transpose of $\sigma^b_a$.

\paragraph{Einstein-Cartan-Dirac equations}
\com{Einstein-Cartan-Dirac-Sciama-Kibble ?}

The field equations are the Euler-Lagrange equations corresponding to the Lagrangian~\eqref{eqno:ECDLag}. Using a notation $\nabla\psi = \d \psi + \omega^a_b \sigma^b_a \psi$ for the covariant derivative, they take the following form:
\begin{subequations}\label{eqnos:ECD}
\begin{equation}
	\eta^{ac}\Curv^b_a \wedge e^{(1)}_{bcd}
		= \frac12\lp
			\bp \gamma^c e^{(2)}_{cd}\wedge \nabla \psi
			- \nabla \bp \wedge \gamma^c e^{(2)}_{cd} \psi \rp
			+ m\bp \psi e^{(3)}_d
\end{equation}
\begin{equation}
	T^d\wedge e^{(3)}_{bcd} = -\frac1{16} \bp \{ [\gamma_b, \gamma_c], \gtrois \} \psi
\end{equation}
\begin{equation}
	- \gtrois \wedge \nabla\psi + \frac12\gamma^a \psi T^b \wedge \adeux{ab} - m\psi = 0
\end{equation}

\end{subequations}

Our manipulations will be clearer if we use \enquote{generalised tetrads} which are isomorphisms between $T\ET$ and a fixed reference vector bundle $\aux$. This is also convenient from a geometric perspective since their global existence does not require spacetime to be parallelisable. 
\footnote{Note however that the existence of a (real) $\Spin$ structure on a noncompact Lorentzian manifold requires it to be orthonormally parallelisable~\cite{GerochSpinorI}.}
The Lorentzian product $\eta$ is thus replaced by a product $\eta$ of Lorentzian signature on the vector bundle $\aux$. The $1$-form $\omega$ is replaced by a non-trivialised connection on $\aux$.

The theory we want to study is a version of the Einstein-Cartan-Dirac theory formulated on a \enquote{generalised frame bundle}. This structure is introduced in the following section.

\section{Frame bundle, Ricci curvature and torsion}\label{annRic}
In this section we recall the framework of frame bundles and introduce that of the so-called \enquote{Generalised Frame Bundles}.
%
%

\subsection{Frame bundles}\label{secno:FrameB}
\subsubsection*{Structure of frame bundles}

We start with a brief reminder on the structure of the bundle of orthonormal frames. A general reference is~\cite{KobaNomi1}. Everything done in this section applies in a straightforward manner to any metric signature and to unoriented frame, time-oriented frame and space-oriented bundles as well as to spin (and pin) frame bundles.

Let us start with an $n$-manifold $M$, provided with a metric $g$ of signature $(p,q)$ as well as an orientation. 
The bundle of positive orthonormal frames $\SOp(M)\overset{\pi}{\to} M$ has as underlying set
\[
			\bigcup_{m\in M} 
					\bigg\{
						\Iso\Big( (\setR^{p+q}, \eta_{p,q})
							, (T_m M, g|_m) \Big)
					\bigg\}
\]
with $\eta_{p,q}$ the metric of signature $(p,q)$ on $\setR^{p+q}$.
It has the structure of a $\SOp_{p,q}$-principal bundle.

On the other hand, there is the bundle of \emph{linear frames} $\GL(M)$, which is defined similarly but using \emph{vector space isomorphisms}. It is equipped with a \enquote{canonical soldering form}~\cite{KobaNomi1, MichorTopics} defined according to the following diagram:
\begin{center}
\begin{tikzcd}
	T\GL(M)
		\ar[d]
		\ar[dr,"\alpha"] &
	\\
	T\GL(M)/V\GL(M)\simeq \pi^*TM
		\ar[r] &
	\setR^n
\end{tikzcd}
\end{center}
The canonical soldering form restrics to a \enquote{soldering form} on $\SOp(M)$, which we now define:
\begin{proposition}\label{propno:soldering}
	Let $G$ be a Lie group with an action on $\setR^n$. Let $P\xrightarrow{\pi} M$ be a $G$-principal bundle.
	There is a natural equivalence between the following data:
	\begin{enumerate}
	\item A $G$-equivariant bundle map $P\to \GL(M)$ to the bundle of linear frames of $TM$,
	\item A $G$-equivariant, horizontal $\setR^n$-valued $1$-form $\alpha$ on $P$ which is onto $\setR^n$ at each point.
	\item A vector bundle isomorphism $TM \isom P\times_G \setR^n$.
	\end{enumerate}
\end{proposition}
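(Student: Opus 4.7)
The plan is to establish the two equivalences $(1) \Leftrightarrow (2)$ and $(1) \Leftrightarrow (3)$, each of which comes from a pointwise description of a linear frame: at a point $p \in P$ with $m = \pi(p)$, an element $f(p) \in \GL(M)_m$ is just a linear isomorphism $\setR^n \isom T_m M$, and such an isomorphism can be encoded either by its inverse (giving a covector-valued object on $P$) or by descent of a trivial bundle map along the $G$-action (giving the associated bundle). In each direction I would first work pointwise to define the correspondence, then verify smoothness and $G$-equivariance, using the given $G$-action $\rho \colon G \to \GL(\setR^n)$ throughout.

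For $(1) \Leftrightarrow (2)$: given $f \colon P \to \GL(M)$ equivariant, I set
\[
	\alpha_p(X) := f(p)^{-1}\bigl(\d\pi_p(X)\bigr), \qquad X \in T_p P.
\]
Horizontality is immediate because $\d\pi$ kills vertical vectors; surjectivity onto $\setR^n$ holds because $f(p)$ is an isomorphism and $\d\pi_p$ is onto; $G$-equivariance of $\alpha$ follows from $f(p\cdot g) = f(p)\circ \rho(g)$ (the defining condition for $G$-equivariance of a bundle map with respect to the chosen action) and from the $G$-invariance of $\pi$. Conversely, given $\alpha$, horizontality lets it factor at each $p$ through $T_p P / V_p P \isom T_{\pi(p)}M$ into a linear map $\bar\alpha_p \colon T_{\pi(p)}M \to \setR^n$. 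Since $\alpha_p$ is onto and the source and target of $\bar\alpha_p$ have the same dimension, $\bar\alpha_p$ is an isomorphism; I then set $f(p) := \bar\alpha_p^{-1}$. Smoothness is inherited from smoothness of $\alpha$, and the two constructions are inverse by inspection.

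For $(1) \Leftrightarrow (3)$: given $f$, the map $P\times \setR^n \to TM$, $(p,v)\mapsto f(p)(v)$, is $G$-invariant for the diagonal action $g\cdot (p,v) = (p\cdot g, \rho(g)^{-1} v)$ precisely because of the equivariance condition on $f$; it therefore descends to a bundle map $P\times_G \setR^n \to TM$ which is fibrewise an isomorphism because $f(p)$ is, hence a bundle isomorphism. Conversely, a bundle isomorphism $\varphi \colon P\times_G \setR^n \isom TM$ yields, for each $p$, the map $v \mapsto \varphi([p,v])$ which is a linear isomorphism $\setR^n \isom T_{\pi(p)}M$, and equivariance is forced by the definition of the associated bundle. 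The two constructions are mutually inverse. The only real point to watch is the bookkeeping of the $G$-action, in particular the convention that $f(p\cdot g) = f(p)\circ \rho(g)$ corresponds to the right action on $P\times \setR^n$ with $\rho(g)^{-1}$ on the fibre and to the equivariance $R_g^*\alpha = \rho(g)^{-1}\alpha$; this sign/inverse juggling is the main thing that could be written incorrectly, but is otherwise routine.
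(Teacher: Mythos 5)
Your proof is correct and follows essentially the same route as the paper: the explicit formula $\alpha_p = f(p)^{-1}\circ\d\pi_p$ is exactly the pullback of the canonical soldering form, the converse factors $\alpha$ through $T_pP/V_pP\simeq T_{\pi(p)}M$ just as the paper factors it through $\pi^*TM$, and the associated-bundle equivalence is the same descent-along-the-$G$-quotient argument. The only cosmetic difference is that you link the data as $(1)\Leftrightarrow(2)$ and $(1)\Leftrightarrow(3)$ while the paper chains $(1)\Leftrightarrow(2)\Leftrightarrow(3)$, which changes nothing of substance.
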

\begin{proof}
	The mapping $1. \implies 2.$ is simply pulling back the canonical soldering form of $\GL(M)$. It gives a $G$-equivariant form which is also horizontal (since the bundle map maps vertical vectors to vertical vectors).
	The inverse mapping is constructed as follows: since $\alpha : TP\to \setR^n$ is horizontal, it factors to a map $\pi^*TM \to \setR^n$ which has to be an isomorphism over each point of $P$. This gives the following commutative diagram of vector bundle above $P$
	\[\begin{tikzcd}
		TP	
			\ar[r, "\alpha"]
			\ar[d, "\d \pi"]
		&
		P\times \setR^n
		\\
		\pi^*TM 
			\ar[ru, dashed, "\sim"' sloped]
		&
	\end{tikzcd}\]
	with the factorised map which is $G$-equivariant. 
	It thus defines above each point $p$ of $P$ a vector space isomorphism $\setR^n \isom T_{\pi(p)} M$, and this in a smooth and equivariant fashion: this is exactly an equivariant bundle map $P\to \GL(M)$.
	
	The mapping $2. \implies 3.$ is done by factoring the $G$-equivariant map $\pi^* TM \isom P\times \setR^n$ under the quotient by $G$: this gives a vector bundle isomorphism above $M$ : $\pi^*TM/G \equiv TM \isom P\times_G \setR^n$. The inverse mapping is just pulling back under the bundle fibration $P \xrightarrow{\pi} M$ the isomorphism $TM \isom P\times_G \setR^n$ and composing with the natural trivialisation $\pi*\lp P\times_G \setR^n \rp \equiv P\times \setR^n$.
\end{proof}

Datum 1. is a reduction of structure group of $\GL(M)$ along $G\to \GL_n$; we call a form satisfying the properties of 2. a \emph{soldering form}.

\subsubsection*{Spin structures}
A \emph{$\Spinp_{p,q}$}-structure is given by a \enquote{lifting} of the so-called \emph{structure group} $\SOp_{p,q}$ to $\Spinp_{p,q}$. In other words, given a metric and a space-and-time orientation, it is defined by a $\Spinp_{p,q}$-principal bundle $P \to M$ with an equivariant bundle map $P\to \SOp(M)$. 
\footnote{Note that a similar lifting of the positive \emph{linear} frame bundle to a principal bundle with the connected double cover of $\GL_n^+$ as structure group induces a $\Spinp_{p,q}$-structure for every metric and space-and-time orientation.}
According to Proposition~\ref{propno:soldering}, the mapping $P \to \SOp(M)$ is equivalent to the data of the solder form pulled back to $P$: the $\Spinp_{p,q}$-structure corresponds to a principal $\Spinp_{p,q}$-bundle equipped with an nondegenerate horizontal $\Rpq$-valued $1$-form which is equivariant for the action $\Spinp_{p,q}\to \SOp_{p,q}$.

\subsubsection*{Connection forms}
A (principal) connection 1-form on $\SOp(M)$ (hereafter \emph{connection form}) 
is given by an $\sopq$-equivariant $\sopq$-valued 1-form
\footnote{
	This is different, although related, from the $\omega$ used in Section~\ref{secno:EC} which is a $1$-form on $\ET$ and not $\tot$.
}
$\omega$ on $\SOp(p,q)$, which is normalized for the action of $\sopq$. Namely, writing $\bh\in\Gamma(T\SOp(M))$ the action on $M$ of an element $\h\in\sopq$, the normalisation condition is
\begin{equation} \omega \lp \bh \rp = \h 	\label{eqno:omnormcond}\end{equation}
A connection form defines an Ehresmann connection (horizontal distribution) given by its kernel. The equivariance of the form ensures that the horizontal distribution is equivariant. The combined data of 
\begin{equation}
	\omega\oplus\alpha\in\Omega^1(\SOp(M),\sopq\ltimes\Rpq)
\end{equation}
is called a \emph{Cartan connection} form, or \emph{affine connection}.

\subsubsection*{Tensorial forms}
%

Similarly to the identification $\pi^*TM \isom \SOp(M)\times \Rpq$, the solder form induces an equivariant trivialisation of the pullbacks of tensor bundles to $\SOp(M)$: 
\[
	\pi^*\lp TM^{\otimes k}\otimes T^*M^{\otimes l} \rp \to \Rpq^{\otimes k}\otimes \Rpq^{*\otimes l}
\]
As a consequence, any tensor-valued differential form on $M$ pulls back to $\SOp(M)$ to a differential form with values in a trivialised bundle, which is a \emph{horizontal} (contracts to $0$ with any vertical vector) and \emph{equivariant} form. Such forms on $\SOp(M)$ are called \emph{basic} or \emph{tensorial} 
\footnote{There seems to be an ambiguity regarding this terminology. \cite{KobaNomi1} uses \enquote{tensorial forms} and calls \enquote{pseudotensorial} equivariant forms without any horizontality requirement; \cite{MichorTopics, SGAM} uses \enquote{basic forms} and calls \enquote{semibasic} horizontal forms without equivariance requirement.}
and are in bijection with forms of the corresponding tensorial type on $M$. 
We write 
\[
	\Omhor^\bullet \lp \SOp(M),\, \Rpq^{\otimes k}\otimes \Rpq^{*\otimes l} \rp^{\SOpL}
\]
for the space of $\Rpq^{\otimes k}\otimes \Rpq^{*\otimes l}$-valued tensorial forms. There is the following more general result:
\begin{proposition}[\cite{KobaNomi1}, Example II.5.2]
	Let $G$ be a Lie group acting on a vector space $\Sp$. 
	Let $P\to M$ be a $G$-principal bundle: there is an associated vector bundle $P[\Sp] := P\times_G \Sp$ of fibre $\Sp$.
	There is a natural bijection of graded vector spaces
	\[
		\Omega^k(M, P[\Sp])
		\overset{\sim}{\longleftrightarrow}
		\Omhor^k(P, \Sp)^G
	\]
\end{proposition}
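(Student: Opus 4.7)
The plan is to construct explicit maps in both directions and check they are mutually inverse. The key geometric input is the canonical equivariant trivialisation
\[
	\pi^*P[\Sp] \;\isom\; P\times \Sp,
	\qquad (p,[p,s])\longmapsto (p,s),
\]
well-defined since any element of the fibre $P[\Sp]_{\pi(p)}$ has a unique representative of the form $[p,s]$.

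First I would define the forward map $\Omega^k(M,P[\Sp])\to \Omhor^k(P,\Sp)^G$. Given $\omega\in\Omega^k(M,P[\Sp])$, pull it back to a $\pi^*P[\Sp]$-valued $k$-form on $P$ and compose with the trivialisation above to obtain a $\Sp$-valued form $\tilde\omega$. Horizontality is automatic because $\d\pi$ kills vertical vectors, and equivariance follows from the fact that the trivialisation intertwines the $G$-action on $P\times \Sp$ (the diagonal action $(p,s)\cdot g = (p\cdot g, g^{-1}\cdot s)$) with the trivial action of $G$ on $\pi^*P[\Sp]$ that comes from the $G$-invariance of $\pi^*\omega$.

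Next I would define the inverse map. Given $\tilde\omega\in\Omhor^k(P,\Sp)^G$, for $m\in M$ and tangent vectors $X_1,\dots,X_k\in T_m M$, pick any $p\in\pi^{-1}(m)$ and any lifts $\tilde X_i\in T_p P$ of $X_i$, and set
\[
	\omega_m(X_1,\dots,X_k)
		:= \bigl[p,\; \tilde\omega_p(\tilde X_1,\dots,\tilde X_k)\bigr]
		\in P[\Sp]_m.
\]
Independence of the choice of lifts is immediate from horizontality of $\tilde\omega$ (two lifts differ by vertical vectors). Independence of $p$ follows from equivariance: replacing $p$ by $p\cdot g$ and the lifts by their right-translates multiplies the $\Sp$-value by $g^{-1}$, which is absorbed by the equivalence class $[p\cdot g, g^{-1}\cdot s]=[p,s]$. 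Smoothness is checked locally using any smooth section of $\pi$.

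Finally, the two constructions are manifestly inverse to each other: starting from $\omega\in\Omega^k(M,P[\Sp])$, the reverse construction recovers $\omega$ because $\tilde\omega_p(\tilde X)$ is by definition the $\Sp$-representative of $\omega_{\pi(p)}(X)$ in the frame $p$; starting from $\tilde\omega$, pulling back and trivialising reproduces $\tilde\omega$ on horizontal vectors, hence everywhere by horizontality. Linearity in each degree is obvious, giving the desired isomorphism of graded vector spaces. The only substantive point is the well-definedness of the inverse map, which is really the whole content of the associated-bundle construction.
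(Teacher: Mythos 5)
Your proof is correct, and it is the standard argument: the paper itself gives no proof of this proposition, citing it directly from Kobayashi--Nomizu (Example II.5.2), and your construction via the equivariant trivialisation $\pi^*P[\Sp]\isom P\times\Sp$ with the lift-and-quotient inverse is exactly the argument found there. It is also the same mechanism the paper uses in its proof of Proposition~\ref{propno:soldering} (factoring a horizontal equivariant form through $\pi^*TM$), so nothing further is needed beyond, at most, a brief remark that your ``trivial action of $G$ on $\pi^*P[\Sp]$'' means the lifted action through the $P$ factor only.
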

\com{Il faut une référence...}

\subsubsection*{The equivariance criterion}

Equivariance under the connected group $\SOpL$ is equivalent to equivariance under $\soL$, which can be written as follows for a $\Rpq^{\otimes k}\otimes \Rpq^{*\otimes l}$-valued form $\psi$ on $\SOp(M)$: 
\begin{equation}\label{eqno:equivsigma}
	\forall \xi\in \sopq, \quad
		(\d i_\bh + i_\bh \d) \psi + \h\cdot \psi = 0
\end{equation}
Since $\psi$ is horizontal, Equation~\eqref{eqno:equivsigma} is equivalent to
\[ (i_\bh \d + \h\cdot)\psi = 0 \]
Since $\omega$ vanishes on the horizontal directions, is normalized~(Equation \eqref{eqno:omnormcond}) and since $\alpha^a$ span the horizontal forms, the infinitesimal equivariance  can be equivalently written as
\begin{equation}\label{eqno:equivFrameB}
	\d\psi + \omega\cdot \psi = S_{\mathcal I}\alpha^{\mathcal I}
\end{equation}
where ${\mathcal I}$ are multi-indices, $\alpha^{\mathcal I}$ form a basis of the horizontal forms and $S_{\mathcal I}$ are (variable) coefficients determined by $\sigma$.

\subsubsection*{Covariant derivative}

Given a $\Rpq^{\otimes k}\otimes \Rpq^{*\otimes l}$-valued tensorial $k$-form $\psi$ on $\SOp(M)$, it has a covariant derivative which is defined as
\[
	\dom \psi := \d \psi + \omega \cdot \psi \in \Omega^{k+1} \lp \SOp(M), \Rpq^{\otimes k}\otimes \Rpq^{*\otimes l} \rp
\]
with $\omega\cdot \psi$ implying a wedge product on the differential forms. The form $\dom \psi$ is a tensorial $(k+1)$-form of the same type and therefore corresponds to a tensor-valued $(k+1)$-form on $M$.

\subsubsection*{Curvature and Torsion forms}

To a principal connection $\omega$ on $\SOp(M)$ are associated its torsion 2-form $\Theta \in \Omega^2(\SOp(M), \Rpq)$ and its curvature 2-form $\Omega\in \Omega^2(\SOp(M), \sopq)$ defined by
\begin{align}
	\Theta &:= \d\alpha + \omega \cdot \alpha\\
	\Omega &:= \d\omega + \fwb\omega\omega
\end{align}
with 
\[
	(\omega\cdot \alpha)(X,Y) = \underbrace{\omega(X)}_{\sopq}\cdot \underbrace{\alpha(Y)}_{\Rpq}
		- \underbrace{\omega(Y)}_{\sopq}\cdot \underbrace{\alpha(X)}_{\Rpq}
\]
and
\[
	\fwb\omega\omega (X,Y) =
		\left[\omega(X), \omega(Y)
		\right]
\]
They both are horizontal and equivariant and are as such associated to fields on $M$.

We will be using throughout the following indices convention:
\begin{itemize}
\item indices $a,b \dots$ used for $\Rpq$,
\item indices $i,j \dots$ used for $\sopq$,
\item indices $A,B \dots$ used for $\sopq\ltimes \Rpq$,
\item indices $\alpha, \beta \dots$ used for the spinor module $\Sp$.
\end{itemize}

\subsection{Generalised Frame Bundles}

In this section we define the structure of \enquote{generalised frame bundles}, a characterisation of the \emph{local structure} of the total spaces of frame bundles \emph{equipped with a principal connection}. In particular, it does not involve a group action nor a locally trivial fibration. The definition will rely on the following proposition.

\begin{proposition}\label{propno:GFB}
	Let $p,q\in \setN$ and let $\tot$ be a smooth manifold. Assume it is provided with a $1$-form $\varpi = \omega\oplus \alpha \in \Omega^1 \lp \tot, \sopq\ltimes\Rpq \rp$.
	Assume furthermore that $\varpi$ defines a coframe, namely a vector bundle isomorphism $T\tot \isom \tot \times \sopq\ltimes\Rpq$.
	
	Then the following assertions are equivalent:
	\begin{enumerate}
	\item There exist (variable) coefficients $\Omega^i_{bc}, \Theta^a_{bc}$ such that
		\begin{subequations}\label{eqnos:GFBao}
		\begin{align}
			\d \alpha^a + \omega\cdot \alpha &= \frac12 \Theta^a_{bc} \alpha^b\wedge \alpha^c\\
			\d \omega^i + \fwb\omega\omega^i &= \frac12 \Omega^i_{bc} \alpha^b\wedge \alpha^c
		\end{align}
		\end{subequations},
		
	\item The map $\xi\in \sopq \mapsto \bar\xi = \varpi^{-1}(\xi, 0) \in \Gamma(T\tot)$ defines an action of the Lie algebra $\sopq$ on $\tot$ for which $\varpi$ is equivariant.
	\end{enumerate}
\end{proposition}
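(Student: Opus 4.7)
The plan is to leverage Cartan's magic formula $\Lie_{\bh} \varpi = i_{\bh} \d \varpi + \d (i_{\bh} \varpi)$, which simplifies significantly here since $\varpi(\bh) = (\h, 0)$ is pointwise constant as a $\sopq\ltimes\Rpq$-valued function on $\tot$, so that $\d(i_{\bh} \varpi) = 0$ and hence $\Lie_{\bh} \varpi = i_{\bh} \d \varpi$. Both directions of the equivalence then reduce to analyzing the components of $\d\varpi$ in the coframe $(\omega, \alpha)$.

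For $(1) \Rightarrow (2)$, I would start from the structural equations and contract by $\bh$. Since $\alpha(\bh) = 0$, the $\alpha^b \wedge \alpha^c$ terms on the right-hand sides drop out. The remaining identifications $i_{\bh} \fwb{\omega}{\omega} = \ad(\h)\omega$ and $i_{\bh}(\omega \cdot \alpha) = \h \cdot \alpha$ follow directly from the definitions of $\fwb{\cdot}{\cdot}$ and $\omega \cdot \alpha$, yielding $\Lie_{\bh} \omega = -\ad(\h) \omega$ and $\Lie_{\bh} \alpha = -\h \cdot \alpha$, which is exactly infinitesimal $\sopq$-equivariance of $\varpi$. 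To obtain the Lie-algebra-morphism property $[\bh, \overline{\h'}] = \overline{[\h, \h']}$, I would apply $\varpi$ to both sides: since $\omega(\overline{\h'}) = \h'$ is constant, the Leibniz rule applied to $\Lie_{\bh}(\omega(\overline{\h'}))$ gives $\omega([\bh, \overline{\h'}]) = [\h, \h']$, and an analogous computation with $\alpha$ gives $\alpha([\bh, \overline{\h'}]) = 0$; since $\varpi$ is a coframe, these two identities pin down $[\bh, \overline{\h'}]$ uniquely.

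For $(2) \Rightarrow (1)$, I would expand $\d\omega^i$ and $\d\alpha^a$ in the coframe basis $\{\alpha^b \wedge \alpha^c,\; \alpha^b \wedge \omega^j,\; \omega^j \wedge \omega^k\}$ with undetermined coefficients. The equivariance hypothesis, re-read through $i_{\bh} \d \varpi = \Lie_{\bh} \varpi = -\h \cdot \varpi$ for arbitrary $\h \in \sopq$, gives an identity linear in $\h$. Matching coefficients of $\alpha^b$ and of $\omega^k$ term by term would force the mixed $\alpha^b \wedge \omega^j$ part of $\d\alpha$ to coincide with $-\omega \cdot \alpha$, the $\omega^j \wedge \omega^k$ part of $\d\omega$ to coincide with $-\fwb{\omega}{\omega}$, and both the $\omega^j \wedge \omega^k$ part of $\d\alpha$ and the $\alpha^b \wedge \omega^j$ part of $\d\omega$ to vanish. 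The only free parameters that remain are the $\alpha^b \wedge \alpha^c$ coefficients, which are then christened $\Theta^a_{bc}$ and $\Omega^i_{bc}$. No step presents a real obstacle; the only point requiring care is reconciling sign and normalization conventions between $\fwb{\omega}{\omega}$, $\omega \cdot \alpha$, and the standard actions of $\sopq$ on itself and on $\Rpq$, after which everything is routine bookkeeping.
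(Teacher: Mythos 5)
Your proposal is correct and follows essentially the same route as the paper: Cartan's formula together with the constancy of $\varpi(\bar\xi)$ reduces everything to the identity $i_{\bar\xi}\lp \d\varpi + \fwb\varpi\varpi \rp = \Lie_{\bar\xi}\varpi + \xi\cdot\varpi$, and your coefficient-matching in the coframe basis is just the explicit form of the paper's remark that this contraction vanishes for all $\xi$ iff the form is purely horizontal. Your Leibniz-rule verification of $[\bar\xi,\overline{\xi'}] = \overline{[\xi,\xi']}$ merely spells out a step the paper asserts without detail.
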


\begin{remark}
	\begin{enumerate}
	\item Equations~\eqref{eqnos:GFBao} can be gathered into a single equation
		\begin{equation}\label{eqno:GFBvarpi}
			\d \varpi^A + \fwb\varpi\varpi^A = \frac12\Omega^A_{bc} \alpha^b\wedge \alpha^c
		\end{equation}
		using the notation $\Omega^A$ gathering both $\Theta^a$ and $\Omega^i$.
	\item Such a structure was already defined in~\cite{DiffGeoCartan} but for different geometrical purposes.
	They write the equivalent condition
		\[
			\forall \xi\in \sopq, \zeta\in \sopq\ltimes \Rpq, \quad
				[\bar\xi, \bar\zeta] = \overline{[\xi, \zeta]}
		\]
	with $\zeta\in \sopq\ltimes\Rpq \mapsto \bar\zeta:= \varpi^{-1}(\zeta) \in \Gamma(T\tot)$,

	\item The action of the Lie algebra $\sopq$ is necessarily free.
	
	\item Since there is no distinction between the Lie algebras $\sopq$ and $\spin_{p,q}$ the local structure of a generalised frame bundle cannot make the distinction (it requires global considerations).
	\end{enumerate}
\end{remark}
\begin{proof}
	We are going to prove the following: $\forall \xi\sopq, \ \Lie_{\bh} \varpi +\xi \cdot \varpi = 0$ if and only if $\varpi$ satisfies Equation~\eqref{eqno:GFBvarpi}. The former implies that 
	\[
		\forall\xi\in \sopq, \zeta\in \sopq\ltimes \Rpq, 
		[\bar\xi, \bar\zeta] = \overline{[\xi, \zeta]}
	\]
	and therefore that $\bar\xi$ define an action of $\sopq$ on $\tot$.
	
	Let $\bar\xi\in \sopq$. Then we compute
	\[
		\Lie_{\bar\xi} \varpi
			= \lp i_{\bar\xi} \d + \d i_{\bar\xi} \rp \varpi
			= i_{\bar\xi} \d \varpi + \d \xi
			= i_{\bar\xi} \d \varpi
	\]
	and
	\[
		i_{\bar\xi}\wb\varpi\varpi = 2[\xi, \varpi]
	\]
	so that
	\[
		i_{\bar\xi}\lp \d \varpi + \fwb\varpi \varpi \rp = \Lie_{\bh}\varpi + \xi\cdot \varpi
	\]
	
	To conclude, it is enough to remark that $i_{\bar\xi}\lp \d \varpi + \fwb\varpi \varpi \rp$ vanishes for all $\xi\in \sopq$ if and only if it is purely horizontal, namely it decomposes over the horizontal $2$-forms $\alpha^a\wedge \alpha^b$.
\end{proof}

\begin{definition}[Generalised frame bundle]
	Let $\tot$ be a smooth manifold.
	The structure of a \emph{generalised frame bundle} of signature $(p,q)$ on $\tot$ is the datum of a $\sopq\ltimes \Rpq$-valued coframe $\varpi^A$ satisfying Equation~\eqref{eqno:GFBvarpi}.
\end{definition}
\begin{remark}
	The existence of a $\sopq\ltimes \Rpq$-valued coframe on $\tot$ requires it to be of dimension $\frac{n(n+1)}{2}$ with $n=p+q$, and furthermore to be parallelisable.
\end{remark}
It is straightforward from the definition that a frame bundle equipped with its soldering form $\alpha^a$ and a connection form $\omega^i$ has a structure of generalised frame bundle.

A generalised frame bundle has an action of the Lie algebra $\sopq$ and a decomposition of its tangent bundle into horizontal and vertical parts:
	\[
		T\tot = \underbrace{{V\tot}}_{
					\substack{\alpha=0\\
							\simeq \so_n}}
				\oplus \underbrace{{H\tot}}_{
					\substack{\omega=0\\
							\simeq \setR^n}}
	\]

It has enough structure to define the following objects, which coincide with the standard notions in the case of a usual frame bundle:
\begin{itemize}
	\item \emph{Basic vector fields} which are $\sopq$-equivariant maps $\tot \to \Rpq$,
	\item \emph{Basic tensor fields} or more generally \emph{basic sections of associated vector bundles} which are $\sopq$-equivariant maps $\tot \to M$ into a linear representation $M$ of $\sopq$. In particular for spinor modules there are basic spinor fields.
	\item \emph{Covariant derivative} of basic sections (and vector bundle-valued forms) which are computed using
	\[
		\dom := \d + \omega\wedge
	\]	
	\item \emph{Curvature} and \emph{torsion} computed from $\varpi = \omega\oplus \alpha$:
	\begin{align*}
		\Omega^A &:= \d\varpi^A + \fwb\varpi\varpi^A
	\end{align*}
	The $a$ components correspond to the torsion while the $i$ components correspond to the curvature. It is by hypothesis a horizontal $\sopq\ltimes \Rpq$-valued $2$-form, which is always $\sopq$-equivariant. In particular, it is possible to formulate Einstein's equation on a generalised frame bundle.
\end{itemize}

The notation $\dom$ will be freely used for arbitrary forms with values in a representation of $\sopq$, not necessarily horizontal or equivariant.

Equivariant quantities on $\tot$ can be characterised using the following lemma, which is the generalised frame bundle version of the equivariance criterion of Section~\ref{secno:FrameB}:
\begin{lemma}\label{lmno:equivdom}
	Let $M$ be a representation of $\sopq$ and $(\tot, \varpi = \omega\oplus \alpha)$ be a $(\sopq\ltimes\Rpq)$-valued generalised frame bundle.
	Let $\Psi$ be a $M$-valued \emph{horizontal} $k$-form on $\tot$.
	Then $\Psi$ is $\sopq$-equivariant \emph{if and only if} $\d\Psi + \omega \wedge \Psi$ is horizontal, in which case it is necessarily equivariant.
\end{lemma}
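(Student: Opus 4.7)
The plan is to rewrite the equivariance condition using Cartan's magic formula and the horizontality of $\Psi$, turning it into a statement about vertical contractions of $\dom\Psi$. Recall that the infinitesimal $\sopq$-equivariance of an $M$-valued form $\Psi$ reads $\Lie_{\bh}\Psi + \h\cdot \Psi = 0$ for every $\h\in \sopq$, where $\bh = \varpi^{-1}(\h,0) \in \Gamma(T\tot)$ is the fundamental vector field attached to $\h$ by the $\sopq$-action furnished by the generalised frame bundle structure (Proposition~\ref{propno:GFB}).

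Since $\Psi$ is horizontal and $\bh$ is vertical, $i_{\bh}\Psi = 0$, so Cartan's formula reduces to $\Lie_{\bh}\Psi = i_{\bh}\d\Psi$. Similarly, using the normalisation $\omega(\bh) = \h$ together with $i_{\bh}\Psi = 0$, one computes $i_{\bh}(\omega\wedge \Psi) = \h\cdot \Psi$, where the wedge notation implicitly couples the exterior product with the $\sopq$-action on $M$. Adding these identities yields
\[
	i_{\bh}(\dom \Psi) = \Lie_{\bh}\Psi + \h\cdot \Psi \qquad \text{for all } \h\in \sopq.
\]
Since the fundamental vector fields $\bh$ span the vertical distribution $V\tot$, the left-hand side vanishes for every $\h$ precisely when $\dom\Psi$ is horizontal. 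This gives the claimed equivalence.

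For the final assertion that $\dom\Psi$ is itself equivariant whenever it is horizontal, I reapply the criterion just proved, so it is enough to check that $\dom\dom\Psi$ is horizontal. A direct Leibniz computation combined with the identity $\omega\cdot(\omega\cdot\Psi) = \fwb\omega\omega \cdot \Psi$ (which follows from antisymmetrising the iterated action) gives $\dom\dom\Psi = \Omega \cdot \Psi$, with $\Omega = \d\omega + \fwb\omega\omega$. By the generalised frame bundle axiom $\Omega$ is horizontal and $\Psi$ is horizontal by hypothesis, so $i_{\bh}\Omega = 0$ and $i_{\bh}\Psi = 0$; hence $i_{\bh}(\Omega\cdot\Psi)$ vanishes, which finishes the proof. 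There is no serious obstacle; the only bookkeeping point is consistently interpreting $\omega\wedge\Psi$ as coupling the exterior product with the $\sopq$-representation on $M$, which is what makes the identification $\dom\dom\Psi = \Omega\cdot\Psi$ work cleanly.
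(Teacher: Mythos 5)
Your proof of the main equivalence is essentially the paper's own argument: Cartan's formula plus horizontality of $\Psi$ gives $\Lie_{\bh}\Psi = i_{\bh}\d\Psi$, the normalisation $\omega(\bh)=\h$ gives $i_{\bh}(\omega\wedge\Psi)=\h\cdot\Psi$, and since the fundamental fields span $V\tot$ the equivariance condition is exactly horizontality of $\dom\Psi$. Where you genuinely diverge is the final clause. The paper argues directly: once $\Psi$ is known to be equivariant, $\d\Psi$ is equivariant because $\Lie_{\bar\xi}$ commutes with $\d$, and $\omega\wedge\Psi$ is equivariant because $\omega$ is (equivariance of $\varpi$ being part of the generalised frame bundle structure, Proposition~\ref{propno:GFB}), so the sum $\dom\Psi$ is equivariant. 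You instead bootstrap the criterion itself: $\dom\Psi$ is horizontal by hypothesis, so by the equivalence already proved it is equivariant as soon as $\dom\dom\Psi$ is horizontal, and the algebraic identity $\dom\dom\Psi = \Omega\cdot\Psi$ together with horizontality of $\Omega$ (the $\sopq$-part of the structure equation~\eqref{eqnos:GFBao}) and of $\Psi$ closes the loop, since $i_{\bh}$ is an antiderivation. Both arguments are correct; yours has the mild advantage of not invoking the equivariance of $\omega$ separately, at the cost of the curvature identity $\dom\dom = \Omega\cdot{}$, which is purely formal and does hold in this non-principal-bundle setting. The paper's version is shorter and makes visible that each of the two summands $\d\Psi$ and $\omega\wedge\Psi$ is equivariant on its own.
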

\begin{proof}
	The argument is almost the same as for the proof of Proposition~\ref{propno:GFB}. Let $\xi\in \sopq$ and $\bar\xi$ be the associated vector field on $\tot$. Then under the assumption mentioned in the statement, 
	\[
		\Lie_{\bar\xi}\Psi
			= \lp i_{\bar\xi}\d + \d i_{\bar\xi} \rp \Psi
			= i_{\bar\xi} \d \Psi + 0
	\]
	and 
	\[
		i_{\bar\xi} \lp \omega\wedge \Psi \rp
			= \omega \lp \bar\xi \rp \Psi - \omega\wedge i_{\bar\xi}\Psi
			= \xi\cdot \Psi + 0
	\]
	
	One thus concludes that $\Lie_{\bar\xi}\Psi + \xi\cdot \Psi$ vanishes for all $\xi\in \sopq$ if and only if $\d\Psi + \omega\wedge \Psi$ is purely horizontal.
	
	If this is the case, then since $\Lie_{\bar\xi}$ commutes with $\d$, $\d\Psi$ is equivariant and so does $\omega\wedge\Psi$. 
	As a consequence, $\d\Psi + \omega\wedge \Psi$ is equivariant.
\end{proof}
\begin{remark}
	The lemma and its proof hold even when the coframe is not a generalised frame bundle, defining equivariance in this case by the characterisation used in the proof.
\end{remark}

The question of relating and comparing generalised frame bundles to usual frame bundles is discussed in~\cite{PDMPhD} and will be the subject of an upcoming publication~\cite{CartInt}.
We now have all the tools in hands to introduce our Lagrangian theory and study it.

\section{Einstein-Cartan-Dirac theory on a generalised frame bundle}\label{secno:Lag}
We want to build a Lagrangian field theory on a $10$-dimensional manifold $\tot$ such that its solution endow $\tot$ with the structure of generalised frame bundle which satisfies (a generalised frame bundle version of) the ECD Equation~\eqref{eqnos:ECD}. The model we construct does not satisfy the \enquote{natural} generalised frame bundle version of the equations, however we will show in Section~\ref{secno:EuclFE} that (part of) the Euler-Lagrange equation imply as expected the ECD equations \emph{when the generalised frame bundle is a classical frame bundle}.

\subsection{The Lagrangian}\label{secno:Lagrangian}

Let us recall the ECD Lagrangian using trivial tetrads on a $4$-manifold $\ET$:
\[
	\Lag_{\text{ECD}}[e, \omega, \psi]
			=
		\eta^{ac}\Omega_{a}^{b} \wedge e^{(2)}_{bc} \\
		- \frac12 \lp
			\bp \gamma^{(3)} \wedge (\d \psi + \omega^a_b \sigma_a^b \psi)
			+
			(\d \bp + \omega^a_b \bar\sigma_a^b \bp) \wedge \gamma^{(3)} \psi
		\rp
\]

\subsubsection*{Tetrads from a frame bundle perspective}

The tetrad $e$ is a vector bundle isomorphism $T\ET \isom \setR^{1,3}\times \ET$. We propose to replace $\setR^{1,3}\times \ET$ with an arbitrary vector bundle $\aux$ equipped with a Lorentzian product $\eta_\aux$ and a space-and-time orientation. The corresponding \enquote{generalised}
\footnote{This is unrelated to the notion of generalised frame bundles.}
tetrad is a vector bundle isomorphism $T\ET \isom \aux$.
Using the following chain of equivalences
\begin{enumerate}
\item Vector bundle isomorphism $T\ET \isom \aux$
	\\ $\Longleftrightarrow$
\item $\GL_n(\setR)$-principal bundle isomorphism $\GL(\ET) \overset{\sim}{\longleftrightarrow} \GL(V)$ with $\GL(V)$ the bundle of linear frames of $V$
	\\ $\Longleftrightarrow$
\item Equivariant embedding $ \SOp(V) \hookrightarrow_{\SOp_{1,3}} \GL(T\ET)$
	\\ $\Longleftrightarrow$
\item Soldering form on $\SOp(V)$.
\end{enumerate}
the tetrad is seen to be an equivalent datum to a soldering form on $\SOp(V)$.


Since we want to handle spinors, $\aux$ will be assumed to be furthermore equipped with a spin structure and we will consider a \enquote{spin frame bundle} $\Spinp(\aux)$ with structure group $\Spinp_{1,3}$ instead of $\SOp(\aux)$. We will use indices $I, J, \dots$ for tensorial components in $\aux$ (or $\aux^*$).

\subsubsection*{The Einstein-Cartan term}

Focusing first on the term $\eta^{ac}\Curv_{a}^{b} \wedge e^{(2)}_{bc}$ replaced on $\aux$ by $\eta_\aux^{IK}\Curv_I^J \wedge e^{(2)}_{JK}$, let us express the pullback to $\Spinp(\aux)$.

The term $\eta_\aux^{IK}\Curv^J_I$ is a $2$-form on $\tot$ with value in $\ExT^2 \aux$. The associated basic $2$-form on $\Spinp(\aux)$ is 
\[
	\eta^{ac}\Omega^i \rho_{i}{}^b_a \in \Omhor^2 \lp \Spinp(\aux), \ExT^2 \setR^{1,3} \rp
\]
with $\rho_{i}{}^b_a : \so_{1,3}\to \End(\setR^{1,3})$ the standard representation and $\Omega^i = \d \omega^i + \fwb\omega\omega^i$.

The term $e^{(2)}_{JK}$ can be understood as a $\ExT^2 \aux^*$-valued $2$-form, and as such is associated to
\[
	\alpha^{(2)}_{bc} \in \Omhor^2 \lp \Spinp(\aux), \ExT^2 {\setR^{1,3}}^* \rp
\]
As a consequence the term $\eta_\aux^{IK}\Omega_{I}^{J} \wedge e^{(2)}_{JK}$ lifts to
\begin{equation}
	\eta^{ac}\Omega^i \rho_{i}{}^b_a \wedge \alpha^{(2)}_{bc}
\end{equation}

\subsubsection*{The Dirac term}

We now want to express the pullback of a term 
\[
	\frac12 \lp
			\bp \gamma^{(3)} \wedge \nabla \psi
			+
			\nabla \bar\psi \wedge \gamma^{(3)} \psi
		\rp
		- m\bp\psi e^{(4)}
\]
to $\Spinp(\aux)$.

The spinor field $\psi$ corresponds to an equivariant map $\Psi : \Spinp(\aux)\to \Sp$ and its covariant derivative corresponds to 
\[
	\dom \Psi = \d \Psi + \omega\cdot \Psi \in \Omhor^1(\Spinp(\aux), \Sp)
\]
The term is thus pulled back to
\begin{equation}
	\frac12 \lp
		\bar\Psi \gamma^{(3)} \wedge \dom \Psi
		+
		\dom\Psi \wedge \gamma^{(3)} \Psi
	\rp
	- m \bP \Psi \aquatre
\end{equation}

\subsubsection*{Lagrange multipliers}

We want to define a field theory on a $10$-manifold: we need for this a Lagrangian $10$-form.
Let us turn the pulled back $4$-forms into $10$-forms:
\begin{multline*}
	\lp
		\eta^{ac}\Omega^i \rho_{i}{}^b_a \wedge \alpha^{(2)}_{bc}
		- 	\frac12 \lp
				\bar\Psi \gamma^{(3)} \wedge \dom \Psi
				+
				\dom\Psi \wedge \gamma^{(3)} \Psi
			\rp
			-m \bP\Psi \aquatre
	\rp \wedge \omsix \\
	= 
	\eta^{ac}\Omega^i \rho_{i}{}^b_a \wedge \varpi^{(2)}_{bc}
	- 	\frac12 \lp
			\bar\Psi \gamma^{(9)} \wedge \dom \Psi
			+
			\dom\Psi \wedge \gamma^{(9)} \Psi
			\rp
	- m \bP\Psi \vdix
\end{multline*}
with
\[
	\gamma^{(9)} = \gamma^a \vneuf{a}
\]

We want to define a Lagrangian theory on a \emph{structure-less} $10$-manifold $\tot$ such that its solutions provide $\tot$ with the structure of generalised frame bundle and satisfy equations similar to that of the ECD theory~\eqref{eqnos:ECD}. The fields for this theory will be:
\begin{itemize}
	\item A $\so_{1,3}\ltimes\setR^{1,3}$-valued coframe $\varpi = \omega \oplus \alpha$,
	\item A $\Sp$-valued field $\Psi$,
	\item Two extra fields we now introduce.
\end{itemize}

Aside from the (generalised) field equations we want to impose Equation~\eqref{eqno:GFBvarpi}, defining a generalised frame bundle, and a similar horizontality condition on $\Psi$ making it equivariant according to Lemma~\ref{lmno:equivdom}. To this end, we will introduce Lagrange multiplier fields:
\begin{itemize}
	\item A field 
	\[
		P^{BC}_A : \tot \to (\so_{1,3}\ltimes\setR^{1,3})^*\otimes \ExT^2 \lp \so_{1,3}\ltimes\setR^{1,3} \rp
	\]
	which will be required to satisfy the following conditions:
	\begin{enumerate}
		\item $P^{bc}_A = 0$ (which will be replaced below by Equation~\eqref{eqno:Pkappa}),
		\item $P^{BC}_A$ has compact support.
	\end{enumerate}
	
	It will be involved in a term
	\[
		\frac12 P^{BC}_A \lp \d \varpi^A + \fwb\varpi\varpi^A \rp \wedge \vhuit{BC}
	\]
	in the Lagrangian which will impose Equation~\eqref{eqno:GFBvarpi} on $\varpi$ thus defining a generalised frame bundle,
	
	\item A field
	\[
		K^{\alpha i} : \tot \to \bSp\otimes \so_{1,3}
	\]
	of compact support, which will impose the equivariance equation on $\Psi$ using a term
	\[
		\bar K^{\alpha i} \dom \Psi \wedge \vneuf{i}
	\]
\end{itemize}

Note that the term
	\[
		\frac12 P^{BC}_A \lp \d \varpi^A + \fwb\varpi\varpi^A \rp \wedge \vhuit{BC}
	\]
and the term
\[
		\eta^{ac}\Omega^i \rho_{i}{}^b_a \wedge \varpi^{(2)}_{bc}
\]
have the same form. 
As a consequence, we define 
\begin{equation}\label{eqno:defkappa}
	\kappa_A^{bc} = 2\delta_A^i \eta^{bd}\rho_{i}{}^{c}_d
\end{equation}
and replace the condition $P_A^{bc} = 0$ with the condition
\begin{equation}\label{eqno:Pkappa}
	P_A^{bc} = \kappa_A^{bc}
\end{equation}
in order to write
\[
		\frac12 P^{BC}_A \lp \d \varpi^A + \fwb\varpi\varpi^A \rp \wedge \vhuit{BC}	
\]
for the sum of the term with variable $P$'s and the term with $\kappa^{bc}_A$, derived from the usual Einstein-Cartan-Dirac Lagrangian.
We will be using the following notation: $\nonbc{BC}$ to denote indices corresponding to components in $\ExT^2 \so_{1,3} \oplus \so_{1,3}\wedge \setR^{1,3}$, namely components in the supplementary subrepresentation to $\ExT^2 \setR^{1,3}$. For example, $P_A^{\nonbc{BC}}$ are the unconstrained components of $P_A^{BC}$. This Lagrangian (without $\Psi$ terms or $K$ terms) was first introduced in~\cite{LFB}.

The total Lagrangian we will be interested in is the following:
\begin{multline}\label{eqno:Lagtot}
	\Lag[\varpi, \Psi, P, \kappa]
		= \frac12 P^{BC}_A \lp \d \varpi^A + \fwb\varpi\varpi^A \rp \wedge \vhuit{BC} \\
		- m \bP \Psi \vdix
			- \frac12 \lp
				\bar\Psi \gamma^{(9)} \wedge \dom \Psi
				+
				\dom\Psi \wedge \gamma^{(9)} \Psi
				\rp
			+ \frac12 \lp 
				\bar K^{i} \dom \Psi
				-
				\dom\bar\Psi K^i
				\rp	\wedge \vneuf{i}
\end{multline}
(with the contracted spinor $\alpha$ index left implicit).

\subsection{Deriving the Euler-Lagrange equations}

We now derive the Euler-Lagrange equations associated to the Lagrangian~\eqref{eqno:Lagtot}.

\subsubsection*{Useful identities}

Assume in this section $\tot$ is equipped with a $\so_{1,3}\ltimes \setR^{1,3}$-valued $1$-form $\varpi = \omega\oplus \alpha$ which defines a structure of generalised frame bundle.
We gather in this section a couple of identities which will be useful in order to handle the Euler-Lagrange equations.

For this section, let $M$ be any representation of the Lie algebra $\so_{1,3}$.
Let $\Psi^\alpha$ a $M$-valued $k$-form on $\tot$ and $\bP_\alpha$ a $M^*$-valued map on $\tot$.
Recall that we defined $\dom \Psi = \d \Psi + \varpi\cdot \Psi$ with an implicit wedge product. 
Let $\mu$ be any differential form on $\tot$.
We will establish the following formulae:
\begin{subequations}\begin{align}
	\d\vhuit {AB} &= 	\Omega^C\wedge \d\vsept{ABC}-c^C_{AB}\vneuf C
	\\
	\dom \vhuit{bc} &= \Omega^D\wedge \vsept{bcD} \\
	\dom \atrois a &= \Omega^b \wedge \adeux{ab}\\
	\dom \lp \Psi^\alpha\wedge \mu \rp 
		&= \lp \dom \Psi^\alpha \rp \wedge \mu + (-1)^{k} \Psi^\alpha \wedge\d \mu \label{eqno:dlpsimu}
\end{align}\end{subequations}
The first computation uses the \emph{unimodularity} of the Poincaré algebra : $c^B_{AB} = 0$ with $c^{C}_{AB}$ the structure coefficients:
\begin{equation*}
\begin{aligned}
	\d\vhuit {AB}
		&= (\d\varpi^C) \wedge \vsept{ABC}\\
		&= \lp \Omega-\fwb\varpi\varpi \rp^C \wedge \vsept{ABC}\\
		&= \Omega^C \wedge \vsept{ABC} -\frac12 c^C_{DE}\varpi^D \wedge \varpi^E \wedge \vsept{ABC}\\
		&= \Omega^C \wedge \vsept{ABC} - \lp c^C_{BC}\vneuf A - c^C_{AC}\vneuf B + c^C_{AB}\vneuf C \rp \\
		&= \Omega^C \wedge \vsept{ABC} - c^C_{AB}\vneuf C 
\end{aligned}\end{equation*}

Next,
\[
	\omega\cdot \vhuit{bc}
	= -\omega^k c_{bk}^d \vhuit {dc} 
	- \omega^k c_{ck}^d \vhuit{bd}
	= 0
\]
thus 
\[
	\dom \vhuit{bc}
		= \d \vhuit{bc}
		= \Omega^D \wedge \vsept{bcD}
\]

The calculation for $\atrois a$ is very similar:
\begin{equation*}
\begin{aligned}
	\dom \atrois a
		&= \d \atrois a - c^{b}_{ia} \omega^i \wedge \atrois b\\
		&= \d \alpha^b \wedge \adeux{ab} - c^{b}_{ia} \omega^i \wedge \atrois b\\
		&= (\Omega^b - [\omega, \alpha]^b) \wedge \adeux{ab}
			- c^{b}_{ia} \omega^i \wedge \atrois b\\
		&= \Omega^b \wedge \adeux{ab} - c^b_{i d} \omega^i\wedge \alpha^d \wedge \adeux{ab} - c^b_{ia} \omega^i\wedge \atrois b\\
		&= \Omega^b \wedge \adeux{ab} - c^b_{i b} \omega^i \wedge \atrois{a}\\
		&= \Omega^b \wedge \adeux{ab}
\end{aligned}
\end{equation*}

For $\mu$ a differential form, we obtain the following
\begin{equation*}
	\dom \lp \Psi^\alpha\wedge \mu \rp = 
		  \d \lp \Psi^\alpha \wedge \mu \rp + \varpi \cdot \Psi^\alpha \wedge\mu 
		= \lp \dom \Psi^\alpha \rp \wedge \mu + (-1)^{k} \Psi^\alpha \wedge\d \mu
\end{equation*}

Next we check the compatibility with invariant contractions:
\begin{equation*}
\begin{aligned}
	\d(\bP_\alpha \Psi^\alpha)
		&= (\d \bP_\alpha) \Psi^\alpha + \bP_\alpha \d \Psi^\alpha\\
		&= (\dom \bP-\varpi \cdot \bP)_\alpha \Psi^\alpha + \bP_\alpha (\dom \Psi-\varpi \cdot \Psi)^\alpha\\
		&= (\dom \bP_\alpha) \Psi^\alpha + \bP_\alpha (\dom \Psi)^\alpha
			- (\varpi \cdot \bP)_\alpha \Psi^\alpha) + \bP_\alpha (\dom \Psi-\varpi \cdot \Psi)^\alpha \\
		&= (\dom \bP_\alpha) \Psi^\alpha + \bP_\alpha (\dom \Psi)^\alpha
\end{aligned}\end{equation*}
In particular, for $\Psi$ a $M$-valued $k$-form over $\tot$ and $\bp$ a $M^*$-valued $l$-form, the following holds
\begin{equation}
	\d(\bp_\alpha \wedge \Psi^\alpha) = 
	(\dom \bp_\alpha) \wedge \Psi^\alpha + (-1)^l \bp_\alpha \wedge (\dom\Psi)^\alpha
\end{equation}


%
We are now ready to derive the Euler-Lagrange equations.

\subsubsection*{Variations of $P_A^{\nonbc{BC}}$}

Let us derive the Euler-Lagrange equations corresponding to variations of the unconstrained $P_A^{\nonbc{BC}}$. Since the Lagrangian has an order $0$, affine dependency in $P_A^{\nonbc{BC}}$, the equations are straightforwardly derived:
\[
	\lp \d \varpi^A + \fwb\varpi\varpi^A \rp \wedge \vhuit{\nonbc{BC}} = 0
\]
or, more explicitly
\begin{align*}
	\lp \d \varpi^A + \fwb\varpi\varpi^A \rp \wedge \vhuit{bj} &= 0\\
	\lp \d \varpi^A + \fwb\varpi\varpi^A \rp \wedge \vhuit{ij} &= 0	
\end{align*}
Since $(\varpi^a, \varpi^i)$ form a basis of the $1$-forms, $(\vhuit{BC})$ form a basis of the $8$-forms on $\tot$ and as a consequence the equations are equivalent to the existence of coefficients $\Omega^A_{bc}$ such that
\begin{equation}
	\d \varpi^A + \fwb\varpi\varpi^A = \frac12\Omega^A_{bc}\alpha^b\wedge\alpha^c
\end{equation}
Namely, the coframe $\varpi^A$ \emph{endows $\tot$ with the structure of a generalised frame bundle}. 
For the derivation of the subsequent Euler-Lagrange equations, we will assume that this holds.
\footnote{This is in fact not necessary for the remainder of the present section but will allow talking about equivariance.}

\subsubsection*{Variations of $K^{\alpha i}$}

Similarly to the case of $P_A^{\nonbc{BC}}$, the Euler-Lagrange equations with respect to $K^{\alpha i}$ are straightforward to obtain:
\begin{equation*}
	\dom \Psi \wedge \vneuf i = 0
\end{equation*}
namely, since $(\vneuf i, \vneuf a)$ form a basis of the $9$-forms on $\tot$, there exists coefficients $S^\alpha_a \in \Sp\times{\setR^{1,3}}^*$ such that
\begin{equation}
	\dom \Psi^\beta = S^\beta_a \alpha^a
\end{equation}
According to Lemma~\ref{lmno:equivdom}, this is equivalent to $\Psi$ being $\so_{1,3}$-equivariant.

\subsubsection*{Variations of $\Psi$}

\com{Utiliser une notation $\bar\Phi$ ou plutôt une notation $\delta\bar\Psi$ ?}

Since $\Psi$ is an arbitrary map with value in the vector space $\Sp$, a variation can be represented by an arbitrary map $\Phi : \tot \to \Sp$.
The ($\setC$-antilinear) variation of $\Lag[\varpi, \Psi, P, K]$ can be computed as follows: 
\[\begin{aligned}
	2D_{\bP} \Lag[\varpi, \Psi, P, K] \cdot \bar\Phi
		&= 
				-\bar\Phi \gamma^{(9)} \wedge \dom \Psi
				-
				\dom\bar\Phi \wedge \gamma^{(9)} \Psi
			- 	\dom \bar\Phi K^{i}
				\wedge \vneuf{i}
			- 2m\bar\Phi\Psi \vdix \\
		&= 				 -\bar\Phi \gamma^{(9)} \wedge \dom \Psi
				 - \d \lp \bar\Phi \gamma^{(9)}  \Psi \rp
				 + \bar \Phi \dom\lp \gamma^{(9)} \Psi \rp\\
			&\qquad\qquad
			-	\d \lp \bar \Phi K^{i} 
				 \vneuf{i}\rp
				 	+ \bar\Phi \dom (K^{i} \vneuf{i})
			- 2m\bar\Phi\Psi \vdix 
\end{aligned}\]
As a consequence, the Euler-Lagrange equation associated to the variation $\bar\Phi$ of $\Psi$ is
\begin{equation}
	-\frac12 \gamma^{(9)} \wedge \dom \Psi 
		+ \frac12 \dom \lp \gamma^{(9)} \Psi\rp
		- m \Psi
		+ \frac12 \dom \lp \bar K^{i} \vneuf i \rp
		= 0
\end{equation}

\subsubsection*{Variations of $\varpi^A$}

As it is a $\so_{1,3}\ltimes \setR^{1,3}$-valued coframe, $\varpi^A$ is a $\so_{1,3}\ltimes \setR^{1,3}$-valued $1$-form satisfying an open non-degeneracy condition. Therefore, its variations can be represented as a $1$-form $\epsilon^A \in \Omega^1\lp \tot, \so_{1,3}\ltimes \setR^{1,3} \rp$. 
We want the Euler-Lagrange equation corresponding to such a variation of $\varpi^A$.

First, we need to compute the variation of the forms $\vneuf A$ and $\vhuit{AB}$. For this, we compute the variation of $\vdix$:
\[
	D_\varpi \vdix \cdot \epsilon = \epsilon^A \wedge \vneuf A
\]
\com{Notation $D_\varpi$ ??}

In order to compute $D_\varpi \vneuf A$, we use the following identity:
\[
	\varpi^B \wedge \vneuf A = \delta^B_A \vdix 
\]
which varies as follows under first order variations of $\varpi^B$:
\[\begin{gathered}
	\epsilon^B \wedge \vneuf A + \varpi^B \wedge D_\varpi \vneuf A \cdot \epsilon 
		= \delta^B_A \epsilon^C \wedge \vneuf C\\
	\varpi^B \wedge D_\varpi \vneuf A \cdot \epsilon
		= \delta^B_A \epsilon^C \wedge \vneuf C
			- \epsilon^B \wedge \vneuf A
		= \varpi^B \wedge \epsilon^C \wedge \vhuit{AC}
\end{gathered}
\]
Since $\varpi$ is non-degenerate, this implies that
\begin{equation}
	D_\varpi \vneuf A \cdot \epsilon
		= \epsilon^C \wedge \vhuit{AC} 
\end{equation}
In particular, we obtain the variation of $\gamma^{(9)}$:
\[
	\gamma^a\epsilon^C \wedge \vhuit{aC}
\]

Similarly, considering the variation of $\varpi^C\wedge \varpi^D \wedge \vhuit {AB}$ one obtains the expression
\begin{equation}
	D_\varpi \vhuit {AB} \cdot \epsilon
		= \epsilon^C \wedge \vsept{ABC} 	
\end{equation}

We can now compute the variation of $\Lag[\varpi, \Psi, P, K]$ under the variation $\epsilon$ of $\varpi$. We make two separate computations for the $P$ term and the $\Psi$ term. Let us introduce the following notation for the action of $\so_{1,3}$ on $\Sp$:
\[
	\sigma_i : \so_{1,3} \to \End(\Sp)
\]
They are related to the $\sigma^a_b:= \frac14 [\gamma^a, \gamma^{b'}]\eta_{b'b}$ mentioned in Section~\ref{secno:EC} as follows:
\[
	\frac12 {\rho_i}^b_a \sigma_b^a = \sigma_i
\]
In particular, the $\sigma_i$ act on $\Sp$ by anti-hermitian endomorphisms.

The variation of the $P$ term is
\begin{equation}\begin{multlined}
	\frac12 P^{BC}_A\lp 
		\lp
			\d \epsilon^A + \wb\epsilon\varpi^A 		
		\rp
		\wedge \vhuit{BC}
		+ \Omega^A \wedge \epsilon^D \wedge \vsept{BCD} \rp\\
	= \frac12 P^{BC}_A \lp \dom \epsilon^A \wedge \vhuit{BC} + \epsilon^D \wedge \Omega^A \wedge \vsept{BCD} \rp\\
	= \d \lp \frac12 P^{BC}_A \epsilon^A \wedge \vhuit{BC} \rp
		+ \epsilon^A \wedge \dom \lp \frac12 P^{BC}_A \vhuit{BC} \rp
		+ \frac12 P^{BC}_A \epsilon^D \wedge \Omega^A \wedge \vsept{BCD}
\end{multlined}
\end{equation}

The variation of the term 
$ 			- \frac12 \lp
				\bar\Psi \gamma^{(9)} \wedge \dom \Psi
			- 
				\bar K^{i} \dom \Psi \wedge \vneuf{i} 
				\rp
$
is
\begin{equation}
\begin{multlined}
	-\frac12 \bP \gamma^a \epsilon^B \wedge \vhuit{aB} \wedge \dom \Psi
	-\frac12 \bP \gneuf \epsilon^i \sigma_i \Psi
	+\frac12 \bar K^i \epsilon^j\sigma_j \Psi \wedge \vneuf i
	+ \frac12\bar K^i \dom \Psi \wedge \epsilon^B \wedge \vhuit{iB}\\
	= -\frac12 \epsilon^B \wedge \lp
		\bP \gamma^a \dom \Psi \wedge \vhuit{aB}
		- \delta^j_B \bP \sigma_j \gneuf \Psi
		+ \bar K^i \lp - \delta^j_D\sigma_j \Psi \vneuf i
		+ \dom \Psi \wedge \vhuit{iB} \rp
	\rp	
\end{multlined}
\end{equation}
The variation of the complex conjugate term is then the complex conjugate variation.

The variation of $m\bP \Psi \vdix$ is
\[
	m\bP \Psi \epsilon^A \wedge \vneuf A
\]

The Euler-Lagrange equation associated to $\Lag[\varpi, \Psi, P, K]$ with respect to variations of $\varpi^D$ is therefore:
\begin{equation}
\begin{multlined}
	\dom\lp \frac12 P^{BC}_D \vhuit{BC} \rp 
	+ \frac12 P^{BC}_A \Omega^A \wedge \vsept{BCD}
	+ \frac12 \delta^i_D \bP \{\sigma_i, \gneuf \} \Psi 
	- m \bP\Psi \vneuf D \\
	- \frac12 \lp \bP \gamma^a \dom \Psi \wedge \vhuit{aD} 
			- \delta^i_D \bP \gneuf \sigma_i \Psi
			+ \bar K^i \lp - \delta^j_D\sigma_j \Psi \vneuf i
				+ \dom \Psi \wedge \vhuit{iB} \rp
			\rp\\
	+ \frac12 \lp \dom \bP \gamma^a \Psi \wedge \vhuit{aD} 
			+ \delta^i_D \bP \sigma_i \gneuf \Psi
			- \lp \delta^j_D\bP \vneuf i - \dom\bP \vhuit{iD} \rp K^i
			\rp = 0
\end{multlined}
\end{equation}
using the anticommutator notation $\{X, Y\}:= XY + YX$.

\begin{proposition}[Euler-Lagrange equations]
	The Euler-Lagrange equations associated to the Lagrangian~\eqref{eqno:Lagtot} are the following:
	\begin{subequations}\label{eqnos:ELeqs}
	\begin{equation}\label{eqno:ELGFB}
		\d \varpi^A + \fwb\varpi\varpi^A = \frac12\Omega^A_{bc}\alpha^b\wedge\alpha^c
	\end{equation}
	
	\begin{equation}\label{eqno:ELspinequiv}
		\dom \Psi^\beta = S^\beta_a \alpha^a
	\end{equation}

	\begin{equation}\label{eqno:ELPsi}
	-\frac12 \gamma^{(9)} \wedge \dom \Psi 
		+ \frac12 \dom \lp \gamma^{(9)} \Psi\rp
		-m \Psi \vdix
		+ \frac12 \dom \lp \bar K^{i} \vneuf i \rp
		= 0
	\end{equation}

	\begin{equation}\label{eqno:ELvarpi}
	\begin{multlined}
		\dom\lp \frac12 P^{BC}_D \vhuit{BC} \rp 
		+ \frac12 P^{BC}_A \Omega^A \wedge \vsept{BCD}
		+ \frac12 \delta^i_D \bP \{\sigma_i, \gneuf \} \Psi
		- m\bP \Psi \vneuf D \\
	- \frac12 \lp \bP \gamma^a \dom \Psi \wedge \vhuit{aD} 
			+ \bar K^i \lp - \delta^j_D\sigma_j \Psi \vneuf i
				+ \dom \Psi \wedge \vhuit{iD} \rp
			\rp\\
	+ \frac12 \lp \dom \bP \gamma^a \Psi \wedge \vhuit{aD} 
		- \lp \delta^j_D\bP \sigma_j \vneuf i
			- \dom\bP \wedge \vhuit{iD} \rp K^i
		    \rp = 0
	\end{multlined}		
	\end{equation}
	\end{subequations}
\end{proposition}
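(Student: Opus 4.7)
The plan is to read off each of the four Euler--Lagrange equations by varying the Lagrangian~\eqref{eqno:Lagtot} with respect to each field type in turn ($P_A^{\nonbc{BC}}$, $K^{\alpha i}$, the spinor $\Psi$ together with its conjugate, and finally the coframe $\varpi^A$), using the identities gathered in the "Useful identities" subsection to perform the needed integrations by parts.

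The first two variations are immediate: since $P_A^{\nonbc{BC}}$ and $K^{\alpha i}$ enter affinely in the Lagrangian with no derivatives, the associated Euler--Lagrange equations are simply $(\d \varpi^A + \fwb\varpi\varpi^A)\wedge \vhuit{\nonbc{BC}} = 0$ and $\dom \Psi\wedge \vneuf i = 0$. Because $(\vhuit{BC})$ and $(\vneuf i, \vneuf a)$ are bases of the $8$-forms and $9$-forms on $\tot$ respectively, one extracts from these equations coefficients $\Omega^A_{bc}$ and $S^\beta_a$ giving Equations~\eqref{eqno:ELGFB} and~\eqref{eqno:ELspinequiv}.

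For the variation with respect to $\Psi$, one represents an infinitesimal variation by an arbitrary compactly supported map $\Phi : \tot \to \Sp$, differentiates $\Lag[\varpi, \Psi, P, K]$ antilinearly, and moves the derivatives of $\bar\Phi$ onto the other factors via the Leibniz rule~\eqref{eqno:dlpsimu}. The resulting total derivatives integrate to zero and the coefficient of $\bar\Phi$ gives Equation~\eqref{eqno:ELPsi}.

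The main obstacle is the variation with respect to $\varpi^A$. Writing the variation as an arbitrary $\so_{1,3}\ltimes \setR^{1,3}$-valued $1$-form $\epsilon^A$, one must first compute the induced variations of the dual forms $\vneuf A$, $\vhuit{AB}$ and of $\gamma^{(9)}$. I would obtain these by varying the defining contractions $\varpi^B \wedge \vneuf A = \delta^B_A \vdix$ and $\varpi^C\wedge \varpi^D\wedge \vhuit{AB} = (\text{antisymmetrised})\vdix$, then solving using the non-degeneracy of $\varpi$ to get $D_\varpi \vneuf A\cdot \epsilon = \epsilon^C\wedge \vhuit{AC}$ and $D_\varpi \vhuit{AB}\cdot \epsilon = \epsilon^C\wedge \vsept{ABC}$. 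Second, the term involving $\d\epsilon^A$ inside the $P$ contribution is rewritten as $\dom \epsilon^A$ (since the antisymmetric $\wb\epsilon\varpi$ piece combines naturally with $\wb\varpi\varpi$), and an integration by parts via~\eqref{eqno:dlpsimu} moves $\dom$ onto $\tfrac12 P^{BC}_D \vhuit{BC}$. The $K$ and spinor covariant-derivative terms are handled analogously, being careful with the $\so_{1,3}$ components $\epsilon^i$, which act through $\sigma_i$ on the spinor factors of $\gamma^{(9)}$, $\Psi$ and $\bar\Psi$. Collecting the coefficient of $\epsilon^D$ in the resulting $10$-form, with total derivatives discarded thanks to compact support of the variation, yields Equation~\eqref{eqno:ELvarpi}. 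The principal care is bookkeeping: tracking signs from form degrees, keeping the $\sigma_i$ on the correct side of spinor expressions, and separating the $\delta^i_D$ contributions coming from variations of $\omega$ from those coming from variations of $\alpha$.
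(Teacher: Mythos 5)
Your plan follows the paper's own derivation essentially step for step: the affine dependence on $P_A^{\nonbc{BC}}$ and $K^{\alpha i}$ gives \eqref{eqno:ELGFB} and \eqref{eqno:ELspinequiv} via the bases $(\vhuit{BC})$ and $(\vneuf A)$, the antilinear variation in $\bar\Phi$ with integration by parts gives \eqref{eqno:ELPsi}, and the coframe variation is handled exactly as in the paper, by varying $\varpi^B\wedge\vneuf A = \delta^B_A\vdix$ (and its $8$-form analogue) to get $D_\varpi\vneuf A\cdot\epsilon = \epsilon^C\wedge\vhuit{AC}$, $D_\varpi\vhuit{AB}\cdot\epsilon = \epsilon^C\wedge\vsept{ABC}$, then moving $\dom$ off $\epsilon^A$ and collecting the coefficient of $\epsilon^D$. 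This is the same approach and it is correct (only note that $\epsilon^i$ enters through $\sigma_i$ via the covariant derivatives of $\Psi$ and $\bar\Psi$, while $\gamma^{(9)}$ varies only through $\vneuf a$).
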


In the next section we present the mechanism which allows to interpret these equations as a generalisation of the ECD field equations. In particular, there is a topological cancellation mechanism which is crucial in order to decouple the physical fields from the Lagrange multipliers $P^{BC}_A$ and $K^i$.

%







\section{Derivation of the classical field equations}\label{secno:EuclFE}
In this section, we show how the ECD field equations~\eqref{eqnos:ECD} can be derived from the Euler-Lagrange equations~\eqref{eqnos:ELeqs}. This will be done under the assumption that the generalised frame bundle structure defined by $\varpi^A$ is a classical frame bundle. In this sense, the Euler-Lagrange equations~(\ref{eqno:ELPsi},\ref{eqno:ELvarpi}) give a generalisation of the classical field equations to the framework of generalised frame bundles.

\subsection{Preliminary manipulations}

Our analysis will allow decoupling the \enquote{physical fields} $\varpi, \Psi$ from the Lagrange multipliers $P, K$. This is necessary since the Lagrange multipliers are coupled to \emph{differential terms} as opposed to the simpler case of holonomic constraints (of type $f=0$).

To this aim, we will not use the Euler-Lagrange form of the variations of $\Lag$. Let us separate the $P^{\nonbc{BC}}_A$ terms from the $\kappa^{bc}_i$ terms in $P^{BC}_A \dom \epsilon^A \wedge \vhuit{BC}$:
\[\begin{aligned}
	\frac12 P^{BC}_A \dom \epsilon^A \wedge \vhuit{BC} 
		&= \frac12 P^{\nonbc{BC}}_A \dom \epsilon^A \wedge \vhuit{\nonbc{BC}} + \frac12 \kappa^{bc}_i \dom \epsilon^i \wedge \vhuit{bc}\\
		&= \frac12 P^{\nonbc{BC}}_A \dom \epsilon^A \wedge \vhuit{\nonbc{BC}} + \d \lp \frac12 \kappa^{bc}_i \dom \epsilon^i \wedge \vhuit{bc} \rp +  \epsilon^i \wedge \dom \lp \frac12\kappa^{bc}_i \vhuit{bc} \rp\\
		&= \frac12 P^{\nonbc{BC}}_A \dom \epsilon^A \wedge \vhuit{\nonbc{BC}} + \d \lp \frac12 \kappa^{bc}_i \dom \epsilon^i \wedge \vhuit{bc} \rp +  \epsilon^i \wedge \frac12\kappa^{bc}_i \dom \vhuit{bc} \\
		&= \frac12 P^{\nonbc{BC}}_A \dom \epsilon^A \wedge \vhuit{\nonbc{BC}} + \d \lp \frac12 \kappa^{bc}_i \dom \epsilon^i \wedge \vhuit{bc} \rp +  \epsilon^i \wedge \frac12\kappa^{bc}_i \Omega^D \wedge \vsept{bcD}
\end{aligned}\]
We used $\dom \kappa^{bc}_i = 0$ which is a consequence of $\kappa^{bc}_i$ being constants which are $\so_{1,3}$-invariant (defined at~\eqref{eqno:defkappa}).

Instead of the Euler-Lagrange equations~(\ref{eqno:ELPsi},\ref{eqno:ELvarpi}), we will use the following equations with differential terms:
\begin{subequations}\label{eqnos:EL2}
\begin{gather}
	\begin{multlined}
		\frac12P^{\nonbc{BC}}_A \dom \epsilon^A \wedge \vhuit{BC} 
		+ \epsilon^i \wedge \frac12\kappa^{bc}_i \Omega^D\wedge \vsept{bcD}
		+ \frac12P^{BC}_A \epsilon^D \wedge \Omega^A \wedge \vsept{BCD}\\
		-\frac12 \epsilon^D \wedge \lp 
			\lp \bP \gamma^a \dom \Psi - \dom \bP \gamma^a \Psi \rp\wedge \vhuit{aD} 
			+ \lp \bar K^i \dom\Psi	- \dom \bP K^i \rp\wedge \vhuit{iD}
				- \delta^j_D \lp \bar K^i \sigma_j \Psi - \bP \sigma_j \bar K^i \rp\vneuf i
			\rp		
			\\
		+ \frac12 \epsilon^i\wedge \bP \{\sigma_i, \gneuf \} \Psi 
		- \d \lp \frac12P^{\nonbc{BC}}_A \epsilon^A \wedge \vhuit{\nonbc{BC}} \rp
		= 0
	\end{multlined}	
\end{gather}
\begin{gather}
	-\bar \Phi \lp \frac12 \gneuf \wedge \dom \Psi - \frac12\dom \lp \gneuf \Psi \rp 
	+ m\Psi \vdix \rp
	- \dom \bar \Phi K^i \wedge \vneuf i 
	= - \d \lp \bar\Phi K^i\wedge \vneuf i \rp
\end{gather}
\end{subequations}

\subsection{Restricting the variations}

We want to obtain fields equations from Equations~(\ref{eqnos:ELeqs},\ref{eqnos:EL2}).
We first assume that Equations~(\ref{eqno:ELGFB}, \ref{eqno:ELspinequiv}) are satisfied, so that $\varpi^A$ defines a generalised frame bundle structure on $\tot$.
In particular, both $\Omega^A$ and $\dom \Psi$ are horizontal.

\textbf{We now assume that this generalised frame bundle structure is that of a classical \emph{spinor} frame bundle $\tot \to \ET$.} We will show that in this case, we can obtain the ECD field equations on $\ET$.

\begin{remark}
	From a topological perspective, this is equivalent to requiring that the action of $\so_{1,3}$ on $\tot$ is complete, that it integrates to a \emph{free} action of $\Spinp_{1,3}$ and that this action is proper (\cite{DGMP1}, Corollary 6.5.1).
\end{remark}

We will restrict the considered variation $\epsilon, \Phi$ in two ways:
\begin{enumerate}
	\item $\epsilon^A$ will be required to be \emph{horizontal}: $\epsilon^A = \epsilon^A_b \alpha^b$. This corresponds to \emph{restricting} to a subset of the Euler-Lagrange equations.
	
	\item $\epsilon^A$ and $\Phi$ will be (provisionally) required to be \emph{$\Spinp_{1,3}$-equivariant}. It will turn out that the obtained equations manifestly extend 
	\footnote{Since the variational derivative of $\Lag$ in the direction $(\epsilon, \Phi)$ is $\Cinf(\tot)$-linear in $(\epsilon, \Phi)$ and given that $\Spinp_{1,3}$-equivariant maps generate all maps over $\Cinf(\tot)$, restricting to equivariant variations does not weaken the set of equations.}
	to non-equivariant $\epsilon^A, \Phi$.
\end{enumerate}

As a consequence of these hypotheses, both $\dom \epsilon^A$ and $\dom \Phi$ are purely horizontal (Lemma~\ref{lmno:equivdom}). Therefore we obtain the following identities:
\begin{gather*}
	\frac12 P^{BC}_A \dom \epsilon^A \wedge \vhuit{BC}
		= \frac12 \kappa^{bc}_A \dom \epsilon^A \wedge \vhuit{bc}\\
	\frac12P^{BC}_A \epsilon^D \wedge \Omega^A \wedge \vsept{BCD}
	=
	\frac12P^{bc}_A \epsilon^d \wedge \Omega^A \wedge \vsept{bcd}\\
	\bar K^i \dom \Psi \wedge \vhuit{iB} = 0\\
	\dom \bP K^i\wedge \vhuit{iD} = 0\\
	\bar K^i \dom\Phi \wedge \vneuf i = 0\\
%
%
%
	\epsilon^D \wedge \dom \Psi \wedge \vhuit{iD} = 0\\
	\epsilon^D \wedge \dom \bP \wedge \vhuit{iD} = 0\\
\end{gather*}

We thus obtain the following equations:

\begin{subequations}\label{eqnos:EL3}
\begin{gather}
	\begin{multlined}
		\frac12 \kappa^{bc}_i \lp \epsilon^i \wedge \Omega^D \wedge \vsept{bcD} + \epsilon^d \wedge \Omega^i \wedge \vsept{bcd} \rp 	
		-\frac12 \epsilon^b \wedge \lp 
			\bP \gamma^a \dom \Psi
			- \dom \bP \gamma^a \Psi  \rp \wedge \vhuit{ab} \\
		+ \frac12 \epsilon^i \wedge \bP \{\sigma_i, \gneuf \} \Psi
		- m\bP \Psi \epsilon^b \wedge \vneuf b
		= \d \lp \frac12P^{\nonbc{BC}}_A \epsilon^A \wedge \vhuit{\nonbc{BC}} \rp
	\end{multlined}	
\end{gather}
\begin{gather}
	\frac12 \bar\Phi \lp -\gneuf \wedge \dom \Psi 
		+ \dom \lp \gneuf \Psi \rp \rp
		- \frac12(\dom \bar\Phi)  K^{i} \vneuf i
		- m \bar\Phi \Psi \vdix
		= -\frac12 \d \lp \bar \Phi  K^{i} \vneuf i \rp
\end{gather}
\end{subequations}

In order to decouple the two sides, we will need on the \emph{cancellation mechanism} described in the next section.

\subsection{The cancellation mechanism}

The cancellation mechanism uses integration on $\Spinp_{1,3}$-orbits in order to decouple the left and right sides of Equations~\eqref{eqnos:EL3}. It relies on the following Lemma:
\begin{lemma}\label{lmno:cancellemma}
	Let $\tot$ be a manifold with an action of $\Spinp_{1,3}$. Denote by $A$ the ideal of differential forms vanishing on each orbit of $\Spinp_{1,3}$. Let $E\in \Omega^6(\tot)$ and $P\in \Omega^5(\tot)$ such that:
	\begin{itemize}
	\item $E$ is $\Spinp_{1,3}$-invariant,
	\item $P$ has compact support on every $\Spinp_{1,3}$-orbit,
	\item $ E \equiv \d P \bmod A$.
	\end{itemize}
	
	Then $E \equiv \d P \equiv 0 \bmod A$.
\end{lemma}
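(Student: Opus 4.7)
The plan is to reduce to a claim about individual $\Spinp_{1,3}$-orbits: I fix an arbitrary orbit $\iota: O \hookrightarrow \tot$ and show that $\iota^* E = 0$. Since $A$ is defined precisely as the ideal of forms whose restriction to each orbit vanishes, having $\iota^* E = 0$ on every orbit is equivalent to $E \in A$; combined with the hypothesis $E - \d P \in A$, this forces $\d P \in A$ as well. Pulling back the hypothesis along $\iota$ yields the identity $\iota^* E = \d(\iota^* P)$ on $O$. By the compact-support hypothesis on $P$, the form $\iota^* P$ is compactly supported on $O$, and hence so is $\iota^* E$.

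Next I would exploit the homogeneous structure of the orbit: $O \cong \Spinp_{1,3}/H$ where $H = \Stab(x)$, so $\dim O \leqslant \dim \Spinp_{1,3} = 6$. When $\dim O < 6$, $\iota^* E = 0$ automatically because $E$ is a $6$-form. The interesting case is $\dim O = 6$, where $\iota^* E$ is a top form on $O$ which moreover is $\Spinp_{1,3}$-invariant because $E$ is. Invariant top forms on a homogeneous space $G/H$ correspond to $H$-fixed elements of $\wedge^{\dim G/H}(\mathfrak g/\mathfrak h)^*$, a one-dimensional space in our situation, so $\iota^* E$ is either identically zero or nowhere zero.

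The endgame is a dichotomy in the case $\dim O = 6$. Suppose $\iota^* E$ were nowhere zero. Its support would then be all of $O$, and compact support of $\iota^* E$ would force $O$ itself to be compact. Since the transitive action of the connected group $\Spinp_{1,3}$ provides an orientation, Stokes' theorem applies and gives
\[
\int_O \iota^* E = \int_O \d(\iota^* P) = 0,
\]
whereas a nowhere-zero invariant top form on a compact orientable homogeneous space integrates to a nonzero scalar multiple of the Haar-induced volume. This contradiction forces $\iota^* E = 0$, which finishes the proof.

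The main subtlety lies in this last dichotomy, but in fact it simplifies in the physically typical regime: since $\Spinp_{1,3}$ is non-compact, orbits are generically non-compact, and a nowhere-vanishing invariant top form on a non-compact orbit simply cannot be compactly supported, so $\iota^* E = 0$ follows without invoking Stokes. The Stokes step is only needed to rule out the exotic case where some orbit happens to be compact (i.e., with cocompact discrete stabiliser), which is where one really needs the hypothesis that $\d P$ integrates to zero rather than just has compact support.
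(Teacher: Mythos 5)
Your proof is correct and follows essentially the same route as the paper: restrict to an arbitrary orbit, use the compact support of $P$ together with Stokes to get $\int_O \iota^*E = 0$, and invoke the dichotomy that an invariant ($6$-)form on a homogeneous orbit is either identically zero or nowhere zero. The only imprecise point is your claim that connectedness of $\Spinp_{1,3}$ by itself orients the orbit (false in general, e.g.\ $\mathbb{RP}^2$ under $\SO_3$), but this is harmless here since in the relevant branch the nowhere-vanishing invariant top form $\iota^*E$ itself provides the orientation.
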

\begin{proof}
	Let $O$ be a $\Spinp_{1,3}$-orbit of $\tot$; we see it as an immersed submanifold. Since $\d P$ has compact support on $O$, so does $E$. They are thus integrable and 
	\[
		\int_O E = \int_O \d P = 0
	\]
	However, since $E$ is $\Spinp_{1,3}$-invariant, it is either identically $0$ or defines a volume form on $O$. Thus $E\in A$.
\end{proof}

\begin{remark}
	It is possible to replace the group action of $\Spinp_{1,3}$ by a Lie algebra action of $\spin_{1,3}$ since the orbits always form immersed submanifolds~(\cite{MichorTopics}, Section I.3).
\end{remark}

We want to turn Equations~\eqref{eqnos:EL3} into an equation of the form involved in Lemma~\ref{lmno:cancellemma}. In particular, we want $6$-forms. We will be using the following notations
\begin{align}
	\Omega^A &= \frac12\Omega^A_{bc} \alpha^b \wedge \alpha^c\\
	\dom \Psi^\beta &= S^{\beta}_a \alpha^a\\
	\epsilon^A &= \epsilon^A_b \alpha^b\\
	\dom \Phi^\beta &= F^\beta_a \alpha^a
\end{align} 
as well as a bracket notation $[abc]$ to denote \emph{normalised} antisymmetrizing of the indices. Furthermore, note that $\Psi$ is $\Spinp_{1,3}$ equivariant and $\gneuf$ is as well, as a contraction between the $\Spinp_{1,3}$-invariant $\gamma^a$ and the equivariant $\vneuf a$. As a consequence, $\gneuf \Psi$ is equivariant and
\begin{equation}
	\dom \lp \gneuf \Psi \rp = M^\beta \vdix
\end{equation}
with $M^\beta$ which are $\Spinp_{1,3}$-equivariant.

The idea is to factor out of a factor $\aquatre$ from Equations~\eqref{eqnos:EL3} in order to obtain equations on $6$-forms which hold on orbits. We will handle the different terms separately:
\begin{gather*}
	\epsilon^i \wedge \Omega^D \wedge\vhuit{bcD}
		= \epsilon^i_{[b} \Omega^d_{cd]} \omsix\wedge \aquatre\\
	\epsilon^d \wedge \Omega^i \wedge \vsept{bcd}
		= \epsilon^d_{[b} \Omega^i_{cd]} \omsix\wedge \aquatre\\
	\epsilon^b \wedge \dom \Psi \wedge \vhuit{ab}
		= \epsilon^b_{[a} S_{b]} \omsix \wedge \aquatre\\
	\d \lp P^{\nonbc{BC}}_A \epsilon^A \wedge \vhuit{\nonbc{BC}} \rp 
		= \d \lp 4 P^{jc}_A \epsilon^A_{c} \vneuf{j} \rp
		= 4 \d \lp P^{jc}_A \epsilon^A_c \omcinq{j} \rp \wedge \aquatre\\
	\gneuf \wedge \dom \Phi = \gamma^a F_a \omsix\wedge \aquatre\\
	\gneuf \wedge \dom \Psi^\beta = \gamma^a S_a^\beta \omsix\wedge \aquatre\\
	\d \lp \gneuf \Psi \rp^\beta = M^\beta \omsix\wedge \aquatre\\
	\epsilon^i \wedge \{\sigma_i, \gneuf\} = \epsilon^i_a \{\sigma_i, \gamma^a\} \omsix \wedge \aquatre\\
	\epsilon^b \wedge \vneuf b = \epsilon^b_b \omsix \wedge \aquatre
\end{gather*}
We used $\d \aquatre = 0$ which is a consequence of~\eqref{eqno:ELGFB}. Note that the $M^\beta$ are non independent from the $S_a^\beta$.

We thus obtain the following equations
\begin{subequations}\label{eqnos:ELmoda}
\begin{gather}
	\begin{multlined}
		\frac12 \kappa^{bc}_i \lp \epsilon^i_{[b} \Omega^d_{cd]} + \epsilon^d_{[b} \Omega^i_{cd]} \rp \omsix	
		-\frac12 \epsilon^b_{[a} \lp 
			\bP \gamma^a  S_{b]} 
			- \bar S_{b]}  \gamma^a \Psi \rp \omsix \\
		+ \frac12 \epsilon^i_a \bP \{\sigma_i, \gamma^a \} \Psi \omsix
		- \epsilon^b_b m\bP \Psi \omsix
		= 2 \d \lp P^{jc}_A \epsilon^A_c \omcinq j \rp 
		\bmod \alpha^a
	\end{multlined}	
\end{gather}
\begin{gather}
	 -\frac12 \bar \Phi S_a \gamma^a \omsix  +\frac12 \bar \Phi M \omsix -m \bar\Phi \Psi \omsix
	 	= -\d \lp \bar\Phi K^i \omcinq i\rp 
			\bmod \alpha^a
\end{gather}
\end{subequations}

These equations hold modulo $\alpha^a$, which vanishes on each orbit of $\Spinp_{1,3}$. Applying Lemma~\ref{lmno:cancellemma}, we conclude that 
\begin{subequations}\label{eqnos:?}
\begin{gather}
	\begin{multlined}
		\frac12 \kappa^{bc}_i \lp \epsilon^i_{[b} \Omega^d_{cd]} + \epsilon^d_{[b} \Omega^i_{cd]} \rp \omsix	
		-\frac12 \epsilon^b_{[a} \lp 
			\bP \gamma^a  S_{b]} 
			- \bar S_{b]}  \gamma^a \Psi \rp \omsix
		+ \frac12 \epsilon^i_a \bP \{\sigma_i, \gamma^a \} \Psi \omsix 
		- m \bP \Psi \omsix
		\\
		\equiv 2 \d \lp P^{jc}_A \epsilon^A_c \omcinq j \rp 
		\equiv 0
		\bmod \alpha^a
	\end{multlined}	
\end{gather}
\begin{gather}
	 -\frac12\bar \Phi S_a \gamma^a \omsix  -\frac12 \bar \Phi M \omsix -m \bar\Phi \Psi \omsix \equiv 0
	\bmod \alpha^a
\end{gather}
\end{subequations}
As a consequence
\begin{subequations}\label{eqnos:??}
\begin{gather}\label{eqno:ELepsilontens}
	\begin{multlined}
		\frac12 \kappa^{bc}_i \lp \epsilon^i \wedge \Omega^d \wedge \vsept{bcd} + \epsilon^d \wedge \Omega^i \wedge \vsept{bcd} \rp 	\\
		-\frac12 \epsilon^b \wedge \lp 
			\bP \gamma^a \dom \Psi \wedge \vhuit{ab}
			- \dom \bP \gamma^a \Psi \wedge \vhuit{ab} \rp\\
		+ \frac12 \epsilon^i \wedge \bP \{ \sigma_i, \gneuf \} \Psi
		- m \bP \Psi \epsilon^b \wedge\vneuf b
		= 0
		\end{multlined}	
\end{gather}
\begin{gather}
		\d \lp \frac12P^{\nonbc{BC}}_A \epsilon^A \wedge \vhuit{\nonbc{BC}} \rp
		=0
\end{gather}
\begin{gather}\label{eqno:ELPhitens}
		\bar \Phi \lp \frac12 \gneuf \wedge \dom \Psi - \frac12 \dom \lp \gneuf \Psi \rp + m \Psi \vdix\rp = 0
\end{gather}
\end{subequations}
hold for every \emph{equivariant and horizontal} $\epsilon^A$ and every \emph{equivariant} $\Phi$. However since these expressions are tensorial in $\epsilon^A$ and $\Phi$, these equations hold for \emph{any} $\Phi$ and \emph{any} horizontal $\epsilon^A$.

Therefore we obtain the following equations:
\begin{subequations}
\begin{gather}
	\frac12 \kappa^{bc}_i \wedge \Omega^d \wedge \vsept{bcd} + \frac12 \bP \{\sigma_i, \gneuf \} \Psi = 0\\
	\frac12 \kappa^{cd}_i \Omega^i \wedge \vsept{bcd}
	- \frac12 \lp 
		\bP \gamma^a \dom \Psi \wedge \vhuit{ab}
		- \dom \bP \gamma^a \Psi \wedge \vhuit{ab}
		\rp
	-m \bP \Psi \vneuf b= 0\\
	\gneuf \wedge \dom \Psi - \dom\lp \gneuf \Psi \rp + m\Psi \vdix = 0
\end{gather}
\end{subequations}

In order to obtain equations on the base manifold $\ET$, we want to factor out the factor $\omsix$ we added in Section~\ref{secno:Lagrangian} so as to build the Lagrangian $10$-form. It is straightforward in the first two equations: $\vsept{bcd} = \alpha^{(1)}_{bcd}\wedge \omsix$ and $\vhuit{ab} = \alpha^{(2)}_{ab} \wedge \omsix$. For the third one, let us decompose
\[
	\gneuf = \gamma^a \vneuf a = \gamma^a \atrois{a} \wedge \omsix
\]
Then
\[\begin{aligned}
	\dom \gneuf
	&= \dom \lp \gamma^a \atrois a \wedge \omsix \rp\\
	&= \dom \lp \gamma^a \atrois a \rp \wedge \omsix + \gamma^a \atrois a \wedge \d \omsix \\
	&= \dom \lp \gamma^a \atrois a \rp \wedge \omsix + \gamma^a \atrois a \wedge \d \omega^i \wedge \omcinq i\\
	&= \dom \lp \gamma^a \atrois a \rp \wedge \omsix + \gamma^a \underbrace{\atrois a \wedge \Omega^i}_{=0} \wedge \omcinq i\\
	&=  \dom \lp \gamma^a \atrois a \rp \wedge \omsix
\end{aligned}\]
with $\dom \lp \gamma^a \atrois a \rp$ which is horizontal (Lemma~\ref{lmno:equivdom}). In fact it can be explicitly expressed as follow: 
\begin{equation*}
\begin{aligned}
	\dom \lp \gamma^a \atrois a \rp 
%
		&= \dom\gamma^a \wedge \atrois a + \gamma^a \dom \atrois a\\
		&= 0 + \gamma^a \Omega^b \wedge \adeux{ab}
\end{aligned}
\end{equation*}
We bypassed the proof that the $\omega$ terms cancel out since we know that the result is horizontal.

Therefore, we obtain the following basic $6$-forms equations:
\begin{gather*}
	\frac12 \kappa^{bc}_i \Omega^d \wedge \aun{bcd}
	+ \frac12 \bP \{\sigma_i, \gamma^a \atrois a \} \Psi = 0\\
	\frac12 \kappa^{cd}_i \Omega^i \wedge \aun{bcd}
	- \frac12\lp
		\bP \gamma^a \dom \Psi \wedge \adeux{ab}
		- \dom \bP \gamma^a \Psi \wedge \adeux{ab}
		\rp
	- m \bP \Psi \vneuf b= 0\\
	\frac12 \gamma^a \atrois a \wedge \dom \Psi - \frac12 \dom\lp \gamma^a \atrois a \Psi \rp 
	+m \Psi \aquatre = 0
\end{gather*}
with
\[
	\dom \lp \gamma^a \atrois a \Psi \rp = \Omega^b \wedge \adeux{ab} \gamma^a \Psi - \gamma^a \atrois a \wedge \dom \Psi 
\]

It is possible to reformulate the first equation without the $\kappa_i^{bc}$ factor, since
\[
	\sigma_i = \frac12 \rho_i{}^b_a \sigma_b^b = \frac1{16} \kappa_i^{bc} [\gamma_b, \gamma_c]
\]
and since $\frac12\kappa$ defines an embedding $\so_{1,3} \hookrightarrow \lp \setR^{1,3}\rp^* \otimes \setR^{1,3}$ the equation is equivalent to
\[
	\Omega^d \wedge \aun{bcd}
	+ \frac1{16}\bP \{[\gamma_b, \gamma_c], \gamma^a \atrois a \} \Psi = 0\\
\]

Finally, these equations can be turned into equations on the associated fields on $\ET$ as follows, using a notation $(\Curv^I_J, T^I)$ for the curvature and the torsion $2$-forms and $\psi, \bp$ for the associated spinor and dual spinor fields on $\ET$:
\begin{gather}
	T^L \wedge e^{(1)}_{JKL}
	= - \frac1{16} \bp \{[\gamma_J, \gamma_K], \gtrois\} \psi\\
	\Curv^{KL} \wedge e^{(1)}_{JKL}
	= 
	\frac12 \lp \bp \gamma^I \nabla \psi \wedge e^{(2)}_{IJ}
		- \nabla \bp \gamma^I \psi \wedge e^{(2)}_{IJ}
		\rp
	+ m \bp\psi e^{(3)}_J
	\\
	\gtrois \wedge \nabla \psi - \frac12 \gamma^I \psi T^J \wedge e^{(2)}_{IJ} + m \psi e^{(4)}= 0
\end{gather}
using the convention from Appendix~\ref{anndual} for the dual forms $e^{(n-k)}$ and $\gtrois := \gamma^I e^{(3)}_I$.

These match with a auxiliary bundle $\aux$ version of the ECD Equations~\eqref{eqnos:ECD}. Thus our we have derived the ECD equations from the Euler-Lagrange equations of our model, in the case that the generalised frame bundle is a standard frame bundle. Therefore, these Euler-Lagrange equations can hold as a generalised of the ECD field equations to the framework of generalised frame bundles. Note however that since we required the variations of $\epsilon^A$ to be purely horizontal, we only handled \emph{part} of the Euler-Lagrange equations. In fact our approach does not work for the remaining Euler-Lagrange equations: these have to be analysed through other means, which shall be the subject of future research.

\subsection{A Euclidean variant}

In this final section we present a Euclidean variant of the model. More detail can be found in~\cite{PDMPhD}, Section 8.3.2. Using a global compactness assumption, it is possible to obtain strong conclusions. 

We therefore replace $\so_{1,3}\ltimes \setR^{1,3}$ with $\so_4 \ltimes \setR^4$ and adapt all other definitions. The $\kappa_i^{bc}$ are adapted to the Euclidean signature and now represent the embedding $\so_4 \hookrightarrow \ExT^2 \setR^4$. 

Furthermore, we assume that the $10$-manifold $\tot$ is compact. In this case, given a generalised frame bundle structure on $\tot$, the associated action of $\so_4$ is necessarily integrable into an action of $\Spin_4$, which is then proper.
This implies the existence of a dense open subset $\tot_{prin} \subset \tot$ on which all isotropy groups are identical to a subgroup $S\subset \Spin_4$ which is either trivial or the center subgroup $\{\pm1\}$. This is enough to deduce that $\tot_{prin} \twoheadrightarrow \tot_{prin}/\Spin_4$ defines a $\Spin_4/S$-principal bundle. The generalised frame bundle structure restricts thus to a standard frame bundle structure on an open submanifold on $\tot$.

\begin{theorem}
	Let $\tot$ be a compact $10$-manifold.
	Let:
	\begin{itemize}
	\item $\varpi^A$ be a $\so_4\ltimes \setR^4$-valued coframe on $\tot$,
	
	\item $P_A^{BC}$ be a $\ExT^2 \so_4 \otimes \so_4^*$-valued field which satisfies the constraint $P_A^{bc} = \kappa_A^{bc}$,
	
	\item $\Psi$ be a $\Sp$-valued field on $\tot$,
	
	\item $K^i$ be a $\Sp\otimes \so_4$-valued field on $\tot$.
	\end{itemize} 
	
	Then if $(\varpi^A, P^{BC}_A, \Psi, K)$  satisfy the EL equations for the following Lagrangian:
	\[\begin{aligned}
		\Lag[\varpi, P, \Psi, K]
			&= \frac12 P^{BC}_A \vhuit{BC} \wedge \lp \d \varpi + \fwb\varpi\varpi \rp^A
	\end{aligned}
	\]
	Then
	\begin{itemize}
	\item There is an action of $\Spin_4$ on $\tot$ for which $\varpi$ is \textbf{normalised and equivariant},
	\item The action of $\Spin_4$, or possibly of the quotient $\SO_4$, is \textbf{free} on a dense open submanifold $\tot_{prin}$. 
	\\
	If $\Psi$ takes somewhere a nonzero value then the group acting freely in $\Spin_4$,
	\item $\tot_{prin}$ defines a \textbf{Riemannian Spin structure} on $\ET:=\tot_{prin}/\Spin_4$,
	
	\item $\Psi$ has an associated spinor field $\psi$ on $\ET$ which along with the Riemannian structure satisfies the \textbf{Einstein-Cartan-Dirac equations}
	\end{itemize}
\end{theorem}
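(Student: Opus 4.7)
The plan is to first run the structural arguments used in Section~\ref{secno:EuclFE} at the level of the Euler--Lagrange equations in order to endow $\tot$ with the infinitesimal principal bundle structure, then to use the twin compactness of $\tot$ and of $\Spin_4$ to upgrade this infinitesimal structure to a genuine principal bundle, and finally to invoke the cancellation mechanism of Section~\ref{secno:EuclFE} verbatim to read off the Einstein--Cartan--Dirac equations on the base.

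First, variation of the unconstrained components $P_A^{\nonbc{BC}}$ in the Lagrangian yields Equation~\eqref{eqno:ELGFB}, which by Proposition~\ref{propno:GFB} gives $(\tot,\varpi)$ the structure of a generalised frame bundle of Euclidean signature. Consequently, we obtain a free infinitesimal action $\xi\in\so_4\mapsto\bar\xi\in\Gamma(T\tot)$ for which $\varpi$ is equivariant, $\omega$ satisfies the normalisation condition $\omega(\bar\xi)=\xi$, and $\alpha$ is horizontal. Similarly, variation of $K^i$ yields Equation~\eqref{eqno:ELspinequiv}, so $\Psi$ is horizontal with $\dom\Psi=S_a^\beta\alpha^a$; Lemma~\ref{lmno:equivdom} then guarantees its $\so_4$-equivariance.

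Second, I would integrate the Lie algebra action to a group action. Since $\tot$ is compact, every vector field $\bar\xi$ is complete, so the infinitesimal $\so_4$-action integrates (by Palais' theorem on the integration of Lie algebra actions) to a smooth action of the simply connected group $\Spin_4$. Because $\Spin_4$ is compact (this is where Euclidean signature is decisive), the action is automatically proper. The infinitesimal freeness implies that all isotropy subgroups are discrete, and the principal orbit theorem for proper actions of compact Lie groups provides a dense open $\Spin_4$-invariant subset $\tot_{prin}\subset\tot$ on which all stabilizers are conjugate to a single subgroup $S$; since $\Spin_4$ is connected, the principal isotropy $S$ lies in the centre $\{\pm 1\}$, so $S$ is either trivial or $\{\pm 1\}$. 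To rule out $S=\{\pm 1\}$ at points where $\Psi\ne 0$: since $\Psi$ is equivariant and $-1\in\Spin_4$ acts on $\Sp$ by $-\Id$, a stabilizer containing $-1$ forces $\Psi$ to vanish at that point; by principality, if $\Psi$ is nonzero anywhere then $S$ must be trivial and $\Spin_4$ acts freely on $\tot_{prin}$. The quotient $\ET := \tot_{prin}/\Spin_4$ is then a smooth $4$-manifold, and $\tot_{prin}\to\ET$ is either a spin frame bundle (if $S=\{1\}$) or an orthonormal frame bundle (if $S=\{\pm 1\}$), with $\alpha$ descending to the soldering form and $\omega$ to a metric connection.

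Third, having established the classical frame bundle structure on $\tot_{prin}$, the entire analysis of Section~\ref{secno:EuclFE} applies: the remaining Euler--Lagrange equations for $\varpi$ and $\Psi$ combine, after the manipulations using the identities for $\dom\vhuit{BC}$ and $\dom\atrois a$, into equations of the form covered by Lemma~\ref{lmno:cancellemma}; the cancellation mechanism (integration along the compact $\Spin_4$-orbits) then decouples the physical fields from the Lagrange multipliers $P^{\nonbc{BC}}_A$ and $K^i$, producing exactly the basic $6$-form equations that translate to the Einstein--Cartan--Dirac equations on $\ET$.

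The main obstacle is the integration step: one must verify that a free infinitesimal $\so_4$-action with equivariant coframe really does integrate to a free (or $\{\pm 1\}$-quotient-free) proper action of $\Spin_4$ on the whole of a dense open subset of $\tot$, and that the principal stratum is large enough to support the cancellation mechanism. This rests critically on the \emph{simultaneous} use of compactness of $\tot$ (to get completeness of fundamental vector fields) and compactness of $\Spin_4$ (to get properness and the principal orbit theorem), neither of which is available in the Lorentzian setting.
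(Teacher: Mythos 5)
Your proposal follows essentially the same route as the paper's own (admittedly sketchy) argument: the $P^{\nonbc{BC}}_A$- and $K$-variations give the generalised frame bundle structure and the equivariance of $\Psi$; compactness of $\tot$ gives completeness of the fundamental vector fields, hence integration of the $\so_4$-action to a $\Spin_4$-action (Palais); compactness of $\Spin_4$ gives properness; the principal orbit theorem produces the dense open stratum $\tot_{prin}$; and the cancellation mechanism of Section~\ref{secno:EuclFE} then yields the ECD equations on $\ET=\tot_{prin}/\Spin_4$. So the architecture is right and matches the paper, which defers the details of the middle steps to~\cite{PDMPhD}.

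There is, however, one genuinely unjustified step: your claim that \enquote{since $\Spin_4$ is connected, the principal isotropy $S$ lies in the centre $\{\pm1\}$}. Connectedness (even compactness) of the group does not imply this: $\SO_3$ acting by left translations on $\SO_3/\Gamma$, with $\Gamma$ a finite icosahedral subgroup, is a proper action of a connected compact group whose (principal) isotropy groups are discrete but not central. What actually forces $S\subset\{\pm1\}$ here is the \emph{equivariant coframe}: for $g\in\Stab(p)$, equivariance of $\varpi$ gives $d_p R_g=\varpi_p^{-1}\circ\rho(g)^{-1}\circ\varpi_p$ with $\rho$ the representation of $\Spin_4$ on $\so_4\ltimes\setR^4$ (adjoint on $\so_4$, vector representation on $\setR^4$, both through $\SO_4$); at a principal point the slice representation of $\Stab(p)$ is trivial, and since $\varpi_p$ sends the orbit directions onto $\so_4$, the slice representation is identified with the induced action on $(\so_4\ltimes\setR^4)/\so_4\cong\setR^4$, i.e.\ with the \emph{faithful} vector representation of the image of $\Stab(p)$ in $\SO_4$. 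Hence $\Stab(p)\subset\ker(\Spin_4\to\SO_4)=\{\pm1\}$, and since this subgroup is normal the stabilizers on $\tot_{prin}$ are literally equal (not merely conjugate), which is what makes $\tot_{prin}\to\ET$ a genuine $\Spin_4/S$-principal bundle. Finally, in the step \enquote{if $\Psi\neq0$ somewhere then the action is free}, your argument only shows that $-1\in S$ forces $\Psi$ to vanish on $\tot_{prin}$; you should add that $\tot_{prin}$ is dense and $\Psi$ continuous, so $\Psi\equiv0$ on all of $\tot$, ruling out the possibility that $\Psi$ is nonzero only on the complement of the principal stratum. With these two repairs the proof is complete and agrees with the paper's intended argument.
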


\paragraph{Acknowledgements}
The present work was carried out by the author during the preparation of their PhD at the Universit\'e Paris Cit\'e, France.

\paragraph{Declaration of Interests} None

\appendix\clearpage

\section{Conventions for dual forms}\label{anndual}
Let $E$ a $n$-dimensional $\setK$-vector space with $n\in\mathbb N$. Provide it with a volume element $\vol\in\ExT^n E$. The volume element defines an isomorphism $\ExT^n E \simeq \setK$. Note that in the article we would use a space (bundle) of \emph{linear forms} as $E$. This Appendix will make use of the Einstein summation convention for repeated indices (as stated in Section~\ref{secno:notconv}).

We define here the notation for interior products between $p$-covectors and $q$-vectors. We will write $\wedge_{i:1 \to q} \alpha^i$ for the wedge product $\alpha^1\wedge\alpha^2\dots\wedge\alpha^q$. Let $\alpha\in\ExT^q E$ and $(X_j)\in (E^*)^p$. We will use the following convention : 
\begin{equation}
\boxed{
	(X_1\wedge X_2\wedge \dots \wedge X_p)\lr \alpha
	= i_{X_1\wedge X_2\wedge \dots \wedge X_p} \alpha
	= i_{X_p}i_{X_{p-1}}\cdots i_{X_1}\alpha}
\end{equation}
The volume element provides isomorphisms $\ExT^p E^* \to \ExT^{n-p} E$ under the following contraction
\begin{equation}
	X\in\ExT^p E^* \mapsto X\lr\vol
\end{equation}
Note that if $E$ is provided with an inner product then precomposing the isomorphisms by the induced inner product on the exterior powers $\ExT^p E \to \ExT^p E^*$ gives the \emph{Hodge duality operator} $\star$.

Let $(e^i)_{\lel 1 i n}$ be a \emph{direct} basis of $E$ and $(u_i)$ its dual basis. Let $I$ a sequence of $p$ indices in $\llbracket 1,n\rrbracket$. Define 
\[ e^I = \operatornamewithlimits{\bigwedge}_{i : 1 \to p} e^{I_i} \]
and dually
\[ u_I = \operatornamewithlimits{\bigwedge}_{i : 1 \to p} u_i \]
We will explicitly use the map $\ExT^p E^* \to \ExT^{n-p} E$ and adopt the notation
\begin{equation}
	e^{(n-p)}_I := u_I\lr\vol
\end{equation}

In components, $\vol$ is represented by the \emph{Levi-Civita symbol} commonly written $\ve_ {i_1\dots i_n}$, interpreted as a \emph{completely antisymmetric} rank $n$ tensor \emph{in a basis of determinant $1$}. For example, the duality between $E^*$ and $\ExT^{n-1} E$ is expressed, for $A=A^iu_i \in E^*$ as follows : 
\begin{align}
	(A^iu_i)\lr \vol &= A^i(u_i\lr\vol) = A^ie^{(n-1)}_i\\ 
	(A\lr \vol)_{i_1\dots i_{n-1}} &= A^{i_0}\ve_{i_0i_1\dots i_{n-1}}
\end{align}
which generalizes to the case of $p$-forms on $E$ in a straightforward manner.

From now on we assume that $I$ does not contain repeated indices of the basis so that $u_I$ and $e^I$ are nonzero. The conventions have been chosen such that for two increasing multi-indices $I$ and $J$,
\begin{equation}
u_I\lr e^J = \delta^J_I
\end{equation}
Define $I^c$ as the set of indices absent from $I$, identified with the corresponding increasing sequence. We will write $\epsilon(I)$ for the signature of the permutation which has $I$ as its 1st $p$ values and then $I^c$ (the permutation is a shuffle in the case $I$ follows the increasing order) so that
\begin{equation}\label{eqno:eIIcvol}
	e^I \wedge e^{I^c} = \epsilon(I)\vol
\end{equation}
We can then express the contraction in the following way
\begin{equation}\label{eqno:eInp}
	e^{(n-p)}_I = u_I\lr \vol = u_I\lr \epsilon(I)e^I\wedge e^{I^c}
	 = (u_I\lr e^I)\epsilon(I)\wedge e^{I^c} = \epsilon(I)e^{I^c}
\end{equation}
Note that this formula can act as a definition in the case $(e^i)$ form a general family of vectors (not assumed to be a basis).
Combining (\ref{eqno:eIIcvol},\ref{eqno:eInp}) we obtain the following formula stating that $e^{(n-p)}_I$ represent a (twisted) $1$-form on $\ExT^p E$ : for $f_Je^J\in\ExT^p E$ 
\begin{equation}\boxed{
	\lp f_Je^J \rp \wedge e^{(n-p)}_I = f_I\vol}
\end{equation}
which justifies placing $I$ as a subscript in $e^{(n-p)}_I$.





\renewcommand{\l}{\lbar}
\emergencystretch=1em
\printbibliography[heading=bibintoc]

\end{document}